\documentclass[aps,prx,twocolumn,superscriptaddress,nofootinbib]{revtex4-2}

\usepackage{amsmath,amssymb,amsfonts,amsthm}
\usepackage{braket}
\usepackage{bm}
\usepackage{xcolor}
\usepackage{tikz}
\usepackage[normalem]{ulem}

\usepackage{algorithm}
\usepackage{algpseudocode}
\usepackage[colorlinks=true,citecolor=magenta,urlcolor=cyan,linkcolor=red]{hyperref}

\makeatletter
\newcommand\fs@doubleruled{%
  \def\@fs@cfont{\bfseries}%
  \let\@fs@capt\floatc@ruled
  \def\@fs@pre{%
    \hrule height 1.2pt depth 0pt
    \kern 2pt
  }%
  \def\@fs@mid{%
    \kern 2pt
    \hrule height 0.5pt depth 0pt
    \kern 1.2pt
    \hrule height 0.5pt depth 0pt
    \kern 2pt
  }%
  \def\@fs@post{%
    \kern 2pt
    \hrule height 1.2pt depth 0pt
  }%
  \let\@fs@iftopcapt\iftrue
}
\makeatother

\restylefloat{algorithm}

\usetikzlibrary{positioning, arrows.meta, shapes.geometric}
\newtheorem{theorem}{Theorem}
\newtheorem{lemma}{Lemma}
\newtheorem{proposition}{Proposition}

\newtheorem{corollary}{Corollary}

\usepackage{physics}

\newcommand{\id}{\mathbb I}
\newcommand{\ii}{\mathrm i}

\usepackage{mathtools}

\usetikzlibrary{shadows.blur, shadings, arrows.meta, positioning, calc}

\begin{document}

\title{Distributed Trotterization with optimal time-scaling entanglement cost}

\author{Tianfeng Feng}
\email{feng.tianfeng.phys@gmail.com}
\affiliation{QICI Quantum Information and Computation Initiative, Department of Computer Science, The University of Hong Kong, Pokfulam Road, Hong Kong SAR, China}
\author{Jinzhao Sun}
\email{jinzhao.sun.phys@gmail.com}
\affiliation{School of Physical and Chemical Sciences, Queen Mary University of London, London E1 4NS, United Kingdom}
 
\author{Yunlong Xiao}
\email{mathxiao123@gmail.com}
\affiliation{Institute of Advanced Intelligence and Computing (IAIC), Agency for Science, Technology and Research (A*STAR), 1 Fusionopolis Way, \#16-16 Connexis, Singapore 138632, Republic of Singapore}
\author{Qi Zhao}
\email{zhaoqi@cs.hku.hk}
\affiliation{QICI Quantum Information and Computation Initiative, Department of Computer Science, The University of Hong Kong, Pokfulam Road, Hong Kong SAR, China}

\date{\today}

\begin{abstract}
Distributed architectures extend quantum simulation of many-body dynamics beyond the reach of any single processor, with shared entanglement mediating interactions between spatially separated devices. 
Conventional implementations rely on quantum teleportation, which provides a universal realization of nonlocal operations but incurs a fixed entanglement cost per gate, irrespective of its strength. 
This becomes increasingly inefficient in product formula simulation, where higher accuracy requires ever more numerous, yet progressively weaker, nonlocal rotations, causing the entanglement cost to diverge in the high-accuracy limit.
Here we introduce a simple repeat-until-success protocol that makes entanglement consumption adaptive to interaction strength. 
Incorporating this primitive into distributed product formulas yields a total entanglement cost that scales linearly with evolution time and independent of Trotter error.
A matching lower bound from quantum communication complexity proves this time scaling to be optimal, establishing a foundation for resource-efficient high-accuracy quantum simulation across networked processors.

\end{abstract}


\maketitle

\noindent \textbf{Introduction}--Some of the most powerful advances in computing have come not from improving individual processors alone, but from enabling many processors to work together.
Distributed quantum computing extends this approach to quantum technologies that connect spatially separated processors using classical communication and shared entanglement, allowing quantum information held across a network to be processed collectively \cite{buhrmanDistributedQuantumComputing2003,Caleffi_2024}.
Such architectures offer a natural route towards computation beyond the capabilities of any single device~\cite{bealsEfficientDistributedQuantum2013, ainleyMultipartiteEntanglementMultinode2024,kimbleQuantumInternet2008,elkinCanQuantumCommunication2014,wehnerQuantumInternetVision2018, legallQuantumAdvantageLOCAL2019,caleffiDistributedQuantumComputing2024,cacciapuotiMultipartiteEntanglementDistribution2024}, with proposed applications spanning factoring~\cite{yimsiriwattanaDistributedQuantumComputing2004, jiangDistributedShorAlgorithm2023}, machine learning~\cite{tangCommunicationefficientQuantumAlgorithm2023, liBlindQuantumMachine2024}, sensing~\cite{guoDistributedQuantumSensing2020}, phase estimation~\cite{liuDistributedQuantumPhase2021} and error correction~\cite{xuDistributedQuantumError2022}, among others~\cite{tanDistributedQuantumAlgorithm2022,anshuDistributedQuantumInner2022,montanaroQuantumCommunicationComplexity2024,chen2025distributed,llovo2025network,11587733,sun2021perturbative}.
Distributed quantum simulation is particularly important in this setting~\cite{buessenSimulatingTimeEvolution2023,feng2024distributedquantumsimulation,russo2026cosma_dqs}. 
As quantum many-body systems grow beyond the capacity of individual processors, their degrees of freedom must be distributed across multiple devices, and their dynamics inevitably involve interactions between spatially separated subsystems. 
Shared entanglement therefore becomes a central resource for reproducing such dynamics across a quantum network.

Product formulas provide a natural setting in which to expose this resource cost by decomposing the target evolution $e^{-\ii Ht}$ over time $t$ into a sequence of short-time local and nonlocal operations.
For a $q$th-order product formula, achieving simulation error $\epsilon$ requires $r=\mathcal{O}\left(t^{1+1/q}\epsilon^{-1/q}\right)$ Trotter steps~\cite{lloydUniversalQuantumSimulators1996, childsNearlyOptimalLattice2019,childsTheoryTrotterError2021}, and therefore an increasing number of nonlocal gates as the approximation is refined.
A direct way to implement these gates is through quantum teleportation: the relevant subsystem is transferred to one processor, evolved locally, and teleported back, at a fixed cost of 2 ebits per nonlocal operation \cite{bennett1993teleporting}. 
The total entanglement cost therefore grows with the number of Trotter steps, i.e., $\mathcal{O}\left(t^{1+1/q}\epsilon^{-1/q}\right)$ \cite{feng2024distributedquantumsimulation}. 
This scaling, however, misses a basic feature of product formulas. 
Finer Trotterization introduces more nonlocal rotations, but each rotation becomes correspondingly weaker. 
Teleportation assigns the same entanglement cost to every rotation, irrespective of its strength, and therefore pays an increasingly large overhead as higher accuracy is sought. 
The key question is thus whether distributed simulation can instead consume entanglement in proportion to the strength of the nonlocal dynamics being realized.

In this work, we answer the question affirmatively by introducing a repeat-until-success (RUS) protocol for the nonlocal rotations that form the basic building blocks of distributed Trotterization. 
Rather than consuming a maximally entangled Bell pair, as in teleportation, each attempt uses a weakly entangled state whose entanglement is matched to the strength of the target rotation.
Each round succeeds with probability 1/2; 
upon failure, the rotation angle is doubled and a stronger entangled resource is used in the next round \cite{Cirac_2001RUS,RUS2013}. 
Although later rounds consume more entanglement, they are reached with exponentially decreasing probability, leaving the average cost of a weak rotation proportional to its strength.
When applied across the full simulation, this adaptive structure of the shared weakly entangled state removes the dependence on the number of Trotter steps: 
the total entanglement cost scales as $\mathcal{O}(t)$, growing only linearly with the evolution time and remaining independent of the target simulation error.
This is fundamentally different from teleportation, where every nonlocal gate carries a fixed cost, so finer Trotterization steadily increases the resource demand and can make it diverge as higher accuracy is required.
Moreover, lower bounds from quantum communication complexity~\cite{feng2024distributedquantumsimulation} show that the linear dependence on evolution time is optimal in general, establishing the RUS construction as an optimal protocol for distributed quantum simulation.

\begin{figure}[t]
    \centering
    \includegraphics[width=1\linewidth]{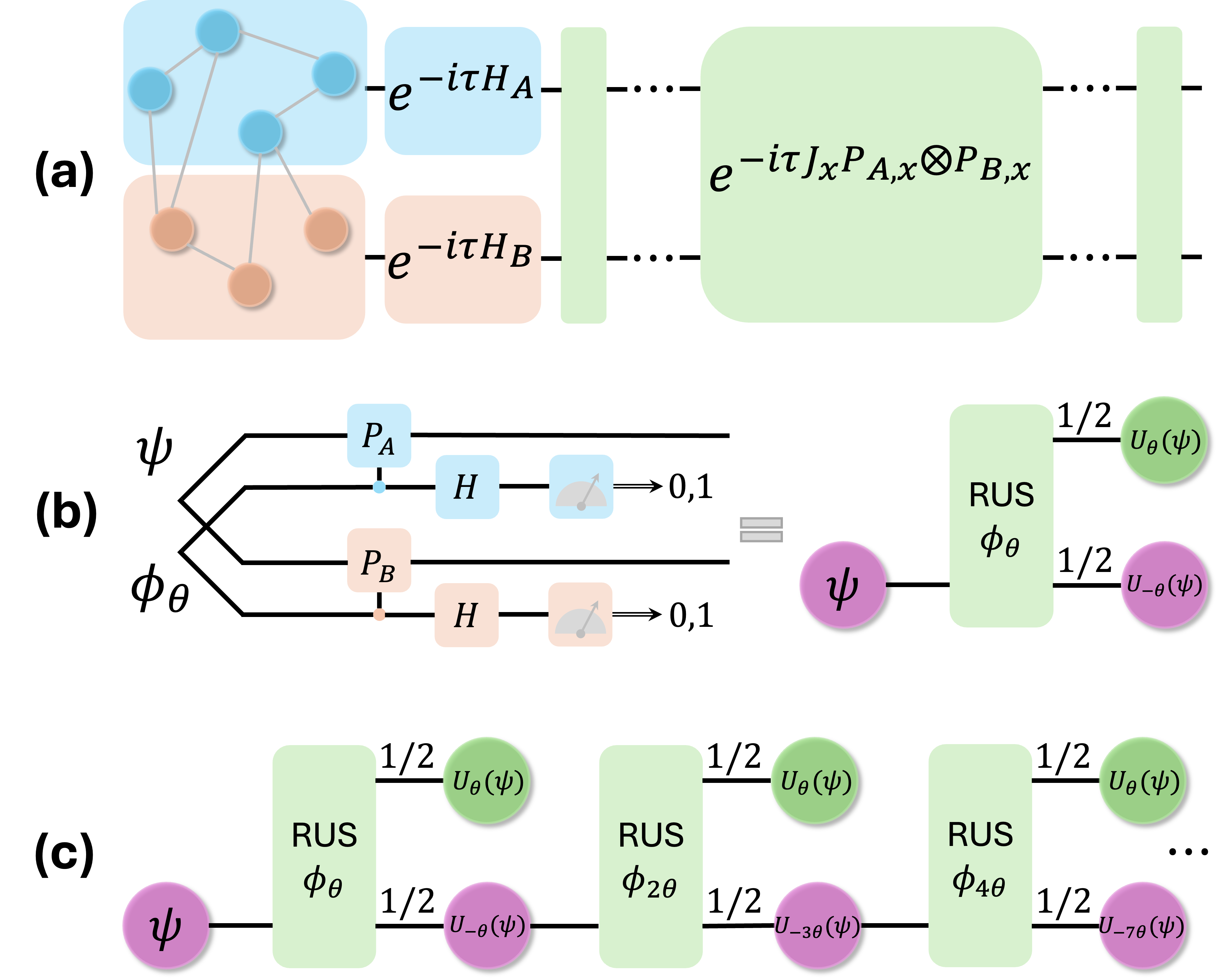}
    \caption{\textbf{Distributed Trotterization}.
    (a) Schematic of the Distributed Quantum Simulation. 
    A many-body Hamiltonian distributed across spatially separated quantum processors. 
    The system is partitioned into subsystems $A$ and $B$, whose local dynamics are generated by $H_A$ and $H_B$, respectively, while the interaction term $H_{AB}$ is implemented through entangling evolutions.
    (b) Repeat-Until-Success Protocol.
    The ancillary state $\phi_\theta$, together with LOCC, implements the target evolution $U_{\theta}$ probabilistically: 
    the even- and odd-parity outcomes occur with equal probability, yielding $U_{\theta}(\psi)$ and $U_{-\theta}(\psi)$, respectively.
    (c) Adaptive Quantum Simulation.
    If a round returns the odd-parity outcome, the protocol is repeated with a doubled interaction angle. 
    After $j$ consecutive failures, the resource state $\phi_{2^j\theta}$ is used, correcting the accumulated inverse evolution and yielding the target state $U_{\theta}(\psi)$ with probability 1/2 at each round.
    }
    \label{fig:Sketch}
\end{figure}

\noindent \textbf{Distributed Simulation}---Distributed quantum simulation (DQS) enables dynamics beyond the capabilities of a single processor by coordinating spatially separated devices~\cite{buessenSimulatingTimeEvolution2023,feng2024distributedquantumsimulation,russo2026cosma_dqs}. 
Shared entanglement links these remote nodes and provides the resource required to mediate couplings between them. 
To capture this structure, we consider an arbitrary bipartition of the network into two parties $A$ and $B$ (see Fig.~\ref{fig:Sketch}(a)), with Hamiltonian $H=H_A+H_B+H_{AB}$, where $H_A$ and $H_B$ govern the dynamics internal to each party, while $H_{AB}$ contains the terms that connect them.
Expanding the latter in Pauli strings $P_{A,x}$ and $P_{B,x}$, we write
\begin{align}\label{eq:H_AB}
    H_{AB}
    \coloneqq
    \sum_{x\in\mathfrak{X}}
    J_x P_{A,x}\otimes P_{B,x}.
\end{align}
The coefficient $J_x$ sets the coupling strength, and $\mathfrak{X}$ labels the cross-partition interactions.
Since the evolutions generated by $H_A$ and $H_B$ can be implemented locally, they require no shared entanglement. 
We therefore treat these dynamics as free, leaving $H_{AB}$ as the sole source of entanglement cost.

To approximate the dynamics generated by $H$ over a total time $t$, we partition the evolution into $r$ steps of duration $\tau\coloneqq t/r$. 
A $q$th-order product formula then gives~\cite{suzukiGeneralTheoryFractal1991,childsTheoryTrotterError2021}
\begin{align}\label{eq:Suzuki-Trotter}
    e^{-iHt}=\left(S_q(\tau)\right)^r+
    \mathcal O(t^{q+1}/r^q),
\end{align}
where $S_q(\tau)$ denotes a single simulation step.  
Increasing $r$ therefore reduces the step duration and systematically improves the approximation. 
For the interaction Hamiltonian $H_{AB}$ in Eq.~\eqref{eq:H_AB}, the nonlocal content can be organized into a finite number $\nu_q$ of sub-layers interleaved with local evolutions using  product formula \cite{childsTheoryTrotterError2021},
\begin{equation}
    S_q(\tau)=\text{local gates}
    \,\times\,
    \left(
    \prod_{\alpha=1}^{\nu_q}
    \prod_{x\in\mathfrak{X}_\alpha}
    e^{
        -i\tau c_{\alpha,x} J_x  
        P_{A,x}\otimes P_{B,x}
    }
    \right).
    \label{eq:general_product_formula_segment}
\end{equation}
Here, $\mathfrak{X}_\alpha$ collects the interaction terms appearing in sub-layer $\alpha$, while the coefficients $c_{\alpha,x}$, determined by the chosen product formula, satisfy $|c_{\alpha,x}|\leqslant 1$. 
At fixed order $q$, both $\nu_q$ and $c_{\alpha,x}$ are independent of the evolution time $t$ and the number of Trotter steps $r$. 
Taking $\nu_1=1$, $\mathfrak{X}_1=\mathfrak{X}$, and $c_{1,x}=1$ recovers the first-order Trotter decomposition (see Fig.~\ref{fig:Sketch}(a)), with $S_1(\tau)=e^{-i\tau H_A}e^{-i\tau H_B}\prod_{x\in\mathfrak{X}}e^{-i\tau J_x P_{A,x}\otimes P_{B,x}}$~\cite{lloydUniversalQuantumSimulators1996}.

The product formula decomposition expresses simulations via elementary Pauli rotations of the form $U_\theta\coloneqq e^{-i\theta P_A\otimes P_B}$, where, for example, $\theta=t c_{\alpha,x} J_x/r$ (see Eq.~\eqref{eq:general_product_formula_segment}), so that the entanglement cost of DQS is ultimately governed by that of these nonlocal operations, particularly in the small $\theta$ regime.
For a bipartite pure state $\psi\coloneqq\ketbra{\psi}{\psi}$, its entanglement is measured by the entropy $E(\psi)\coloneqq -\mathrm{Tr}[\rho_A\log_2\rho_A]$~\cite{nielsen2000quantum}, with $\rho_A\coloneqq\Tr_{B}[\psi]$. 
This notion extends naturally to a unitary through its entangling power $E_{\mathrm{gen}}(U)=\sup_{\psi\,\mathrm{Sep}} E(U(\psi))$~\cite{nielsen2000quantum,soeda2011entanglement,chen2016entanglement},
which captures the maximum entanglement that $U$ can generate from a separable input. 
Since local operations and classical communication (LOCC) cannot create entanglement, any distributed realization of $U$ must consume at least $E_{\mathrm{gen}}(U)$ ebits of shared entanglement. 
For $U_\theta$, this bound takes the form $E_{\mathrm{gen}}(U_\theta) =h_2(\sin^2\theta)= \theta^2\log_2(1/\theta^2) + \mathcal{O}(\theta^2)$ as $\theta\to 0$, where $h_2$ denotes the binary entropy.
Thus, the intrinsic entanglement of a weak rotation vanishes with its strength, whereas teleportation continues to consume a fixed amount per nonlocal gate. 
This makes teleportation increasingly inefficient in the weak rotation regime.

\noindent \textbf{Repeat-Until-Success Protocol}---A straightforward implementation of the distributed rotation $U_\theta$ is to teleport the relevant qubits to one party, apply the evolution locally, and teleport them back. 
Although universal, this approach consumes 2 ebits per rotation, irrespective of $\theta$, and thus becomes increasingly wasteful as the rotation weakens.
Repeat-until-success (RUS) constructions provide an alternative route to realizing quantum operations~\cite{Cirac_2001RUS,RUS2013,sun2026quantum}. 
Here, we introduce an RUS protocol tailored to distributed Pauli rotation $U_\theta$ of multiple qubits, consisting of the following steps (see Fig.~\ref{fig:Sketch}(b)):

\begin{enumerate}
\item {Resource Preparation.} The protocol begins with Alice and Bob preparing a shared weakly entangled state $\ket{\phi_\theta} \coloneqq \cos\theta\ket{00} - i\sin\theta\ket{11}$.
\item {Controlled Operations.} Alice and Bob then apply local controlled-Pauli operations $P_A$ and $P_B$ using their respective halves of the shared state $\ket{\phi_\theta}$ as controls and the corresponding subsystems of $\ket{\psi}$ as targets.
\item {Parity Check.} The shared qubits are subsequently measured in the Hadamard basis, with outcomes $m_A$ and $m_B$ communicated between the parties.
Even parity realizes the desired rotation $U_\theta$ on $\ket{\psi}$, whereas odd parity yields the inverse rotation $U_{-\theta}$; 
each branch occurs with probability $1/2$.
\item {Recursive Correction.} An odd-parity outcome in step 3 leaves the data state as $U_{-\theta}\ket{\psi}$. 
Alice and Bob then prepare $\ket{\phi_{2\theta}}$ and repeat the protocol, yielding $U_{\theta}\ket{\psi}$ with probability 1/2 or $U_{-3\theta}\ket{\psi}$ with probability 1/2, as illustrated in Fig.~\ref{fig:Sketch}(c). 
After each subsequent failure, the rotation angle is doubled and the procedure repeated until the cumulative success probability reaches the desired threshold $1-\delta$.
\end{enumerate}

We next quantify the entanglement cost of the RUS protocol. 
The first round consumes the state $\ket{\phi_\theta}$, with entanglement $E_1\coloneqq E(\phi_\theta)$. 
A failure, which occurs with probability 1/2, causes the protocol to proceed to the next round, where $\ket{\phi_{2\theta}}$ is consumed (see Fig.~\ref{fig:Sketch}(c)).
More generally, round $j$ uses $\ket{\phi_{2^{j-1}\theta}}$, with entanglement $E_j\coloneqq E(\phi_{2^{j-1}\theta})$, and is reached with probability $p_{j}\coloneqq2^{-(j-1)}$.
The expected entanglement cost of a protocol truncated after $K$ rounds is therefore
\begin{align}
    C_K(\theta)
    \coloneqq
    \sum_{j=1}^{K}p_{j}E_j
    =
    \sum_{j=0}^{K-1} 2^{-j}\,
    h_2\!\left(\sin^2(2^j\theta)\right),
\end{align}
where the second equality follows from the relabeling $j-1\to j$.
Although the rotation angle doubles after each failure, the probability of reaching successive rounds falls exponentially. 
This balance yields a simple upper bound on the total entanglement cost that is linear in $\theta$; 
namely,

\begin{lemma}[{\bf Entanglement Cost of Implementing $U_\theta$}]
\label{lem:per_gate}
For any $K\geqslant 1$ and $|\theta|\leqslant 1$, the entanglement cost of the RUS protocol truncated after $K$ rounds obeys $C_K(\theta) \leqslant 9|\theta|$.
\end{lemma}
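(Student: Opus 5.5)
Since every term of $C_K(\theta)$ is nonnegative, it suffices to bound the infinite series $C_\infty(\theta)=\sum_{j\geqslant 0}2^{-j}h_2(\sin^2(2^j\theta))$. The function is even in $\theta$, so we may take $0<\theta\leqslant 1$ (the case $\theta=0$ is trivial). The plan is to split the sum at the index where $2^j\theta$ stops being small. Let $J$ be the largest integer with $2^J\theta\leqslant 1$ (so $J\geqslant 0$ and $2^{J+1}\theta>1$). For the tail $j>J$ we only use $h_2\leqslant 1$:
\begin{equation*}
\sum_{j>J}2^{-j}h_2(\cdot)\leqslant 2^{-J}<2\theta .
\end{equation*}

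For the head $j\leqslant J$ we need a pointwise bound on $h_2(\sin^2 x)$ for $0\leqslant x\leqslant 1$ that is good enough to sum against $2^{-j}$. Set $p=\sin^2 x\leqslant x^2$. Since $h_2$ is increasing on $[0,1/2]$ and $\sin^2 1\approx 0.708$, we cannot directly use monotonicity up to $x=1$. Instead we use the elementary bound $h_2(p)\leqslant 2\sqrt{p(1-p)}\leqslant 2\sqrt p$, valid for all $p\in[0,1]$ (a standard inequality, checkable by comparing derivatives or quoted from Lin's bound $h_2(p)\leqslant 2\sqrt{p(1-p)}$). This gives
\begin{equation*}
h_2(\sin^2 x)\leqslant 2|\sin x|\leqslant 2x .
\end{equation*}
This bound is linear in $x$, which is unfortunately exactly borderline: $\sum_{j\leqslant J}2^{-j}\cdot 2\cdot 2^j\theta=2(J+1)\theta$, which is off by a $\log(1/\theta)$ factor. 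So linear bounds alone are insufficient, and the head must be estimated more carefully.

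The refinement exploits the fact that the terms grow geometrically up to $j=J$. Write $x_j=2^j\theta$, so the head is $\theta\sum_{j\leqslant J}h_2(\sin^2 x_j)/x_j$. The function $g(x)=h_2(\sin^2 x)/x$ behaves like $x\log_2(e/x^2)$ near $0$, so it tends to $0$ and is bounded on $(0,1]$. Because $x_{J-i}=2^{-i}x_J\leqslant 2^{-i}$, the head is at most $\theta\sum_{i\geqslant 0}G(2^{-i})$, where $G(y)=\sup_{0<x\leqslant y}g(x)$. We then need a bound $h_2(p)\leqslant p\log_2(e/p)$ (true since $-(1-p)\log_2(1-p)\leqslant p\log_2 e$), giving $g(x)\leqslant x\log_2(e/x^2)$ using $\sin^2x\leqslant x^2$ and monotonicity of $p\log(e/p)$ on $[0,1]$. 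Hence $G(2^{-i})\leqslant 2^{-i}(\log_2 e+2i)$, and
\begin{equation*}
\sum_{i\geqslant 0}2^{-i}(\log_2 e+2i)=2\log_2 e+4\approx 6.89 .
\end{equation*}
Adding the tail contribution $2\theta$ gives $C_K(\theta)\leqslant (2\log_2 e+6)\theta<9\theta$.

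The main obstacle is this head estimate. A naive linear bound loses a logarithm, so one must use the $x\log(1/x)$ behaviour together with the geometric spacing of the angles $2^j\theta$, and check that the constants close under $9$. One step still needs care: monotonicity of $p\log_2(e/p)$ for $p\leqslant 1$, where the derivative $\log_2(1/p)$ is nonnegative, holds fine. It should also be double-checked that the boundary choice of $J$ does not cost an extra factor of $2$.
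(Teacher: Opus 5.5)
Your skeleton matches the paper's: you reduce to $C_\infty$, split at the largest $J$ with $2^J|\theta|\leqslant 1$, bound the tail by $2^{-J}<2|\theta|$, and bound the head via $h_2(p)\leqslant p\log_2(e/p)$ and $\sin^2x\leqslant x^2$. The head estimate, however, contains a false step.

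From the pointwise bound $g(x)\leqslant x\log_2(e/x^2)$ you conclude $G(2^{-i})\leqslant 2^{-i}(\log_2 e+2i)$. This requires $x\mapsto x\log_2(e/x^2)$ to be increasing on $(0,2^{-i}]$. Its derivative is $-\log_2 e-2\log_2 x$, which is negative for $x>e^{-1/2}\approx 0.61$. So the step holds for $i\geqslant 1$ but fails for $i=0$. The monotonicity of $p\log_2(e/p)$ in $p$, which you did check, is a different statement.

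The claimed inequality $G(1)\leqslant\log_2 e\approx 1.443$ is in fact false, since already $g(1/2)=2h_2(\sin^2\tfrac12)\approx 1.556$. The termwise-supremum strategy also does not close once repaired with the same analytic bound. The correct supremum of $x\log_2(e/x^2)$ on $(0,1]$ is $2e^{-1/2}\log_2 e\approx 1.750$. Using it for the $i=0$ term gives $1.750+(\log_2 e+4)+2\approx 9.19$, which exceeds $9$.

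The missing idea is not to decouple the terms. Keep $b=2^J|\theta|\in(1/2,1]$ fixed. Since $x_{J-i}=b\,2^{-i}$, your own pointwise bound gives $g(x_{J-i})\leqslant b\,2^{-i}\bigl(\log_2(e/b^2)+2i\bigr)$. Summing over $i\geqslant 0$ yields $\Phi(b)=b\bigl(2\log_2(e/b^2)+4\bigr)$.

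The $i=0$ summand alone is not monotone in $b$, but the sum is: $\Phi'(b)=4-2\log_2 e-4\log_2 b>0$ on $(0,1]$. Hence $\Phi(b)\leqslant\Phi(1)=2\log_2 e+4$. Together with the tail $2^{-J}=|\theta|/b<2|\theta|$, this gives $C_K(\theta)\leqslant(2\log_2 e+6)|\theta|<9|\theta|$. This is exactly how the paper closes the argument.
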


The proof and further analysis are provided in Appendix~\ref{app:longtime_rus_cost}, where the additional result for the proposed RUS protocol is established: the entanglement cost itself scales linearly with $|\theta|$, i.e., $C_K(\theta) = \Theta(|\theta|)$, rather than merely admitting a linear upper bound. 
For a simulation using $q$-order product formula (see Eq. \ref{eq:Suzuki-Trotter} and Eq. \ref{eq:general_product_formula_segment}) containing $r\sum_{\alpha=1}^{\nu_q}|\mathfrak X_\alpha|$ nonlocal gates by RUS, choosing
$K=\lceil\log_2(r\sum_{\alpha=1}^{\nu_q}|\mathfrak X_\alpha|/\delta)\rceil$ guarantees that all gate implementations succeed with probability at least $1-\delta$.

The entangled state $\ket{\phi_\theta}$ has entropy of $E(\phi_\theta)=h_2(\sin^2\theta)$.
For bipartite pure states, this quantity also gives the asymptotic entanglement cost: 
a block of many identical copies of $\ket{\phi_\theta}$ can be prepared from Bell pairs by entanglement dilution at a rate approaching $E(\phi_\theta)$ ebits per copy, with vanishing preparation error. 
The repeated structure of Trotterization provides precisely such blocks, because the same weak rotations recur across successive Trotter steps. 
It is therefore unnecessary to prepare each resource state exactly on a single-copy basis;
finite-accuracy dilution suffices, with the preparation error chosen within the overall simulation error budget. 
A finite-blocklength treatment and the resulting error accounting are given in Appendix~\ref{app:finite_accuracy_dilution}.
For the standard unmerged product formula implementation considered here, we further take, without loss of generality, each nonlocal sub-layer to contain the same set of cross-cut interactions, $\mathfrak X_\alpha=\mathfrak X$, so that each segment contains $\nu_q|\mathfrak X|$ nonlocal Pauli rotations. 
Any cancellations, commuting-group optimizations or mergers between adjacent segments can only reduce this number and therefore leave the upper bounds derived below unchanged.

Having established the entanglement cost of a single distributed gate $U_\theta$, we now extend the analysis to a complete step $S_q(\tau)$, whose total entanglement cost is upper bounded by
\begin{align}
    \sum_{\alpha=1}^{\nu_q}
    \sum_{x\in\mathfrak{X}_\alpha}
    9\tau |c_{\alpha,x} J_x|
    =
    &\,
    9\tau\sum_{x\in\mathfrak{X}}
    \left(\sum_{\alpha=1}^{\nu_q}|c_{\alpha,x}|\right) |J_x|\notag\\
    \leqslant
    &\, 9\tau\nu_q\sum_{x\in\mathfrak{X}}|J_x|,
\end{align}
where the inequality follows from $|c_{\alpha,x}|\leqslant 1$ for all $\alpha$ and $x$.
The full distributed simulation comprises $r$ times of $S_q(\tau)$ (see Eq.~\eqref{eq:Suzuki-Trotter}), giving the following bound on the total entanglement cost of the RUS distributed simulation, $C_{\mathrm{RUS}}$:
\begin{align}\label{eq:RUS_Upper}
    C_{\mathrm{RUS}}
    \leqslant
    r\left(9\tau\nu_q\sum_{x\in\mathfrak{X}}|J_x|\right)
    =  \left(9\nu_q\sum_{x\in\mathfrak{X}}|J_x|\right)t.
\end{align}
The dependence on the Trotter number $r$ therefore disappears entirely: 
the required entanglement grows at most linearly with the simulated evolution time $t$.
This yields our main result.

\begin{theorem}[{\bf Entanglement Cost of DQS}]
\label{thm:main}
Consider the bipartite Hamiltonian $H=H_A+H_B+H_{AB}$, with $H_{AB}$ defined in Eq.~\eqref{eq:H_AB}.
For any fixed-order product formula simulation of $e^{-\ii Ht}$ in Eq.~\eqref{eq:Suzuki-Trotter}, with each nonlocal Pauli rotation implemented by the RUS protocol, its entanglement cost grows at most linearly with the evolution time; that is
\begin{align}\label{eq:Upper}
    C_{\mathrm{RUS}}=\mathcal{O}(t\sum_{x\in\mathfrak{X}}|J_x|).
\end{align}
\end{theorem}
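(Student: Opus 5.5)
The plan is to assemble the bound from Lemma~\ref{lem:per_gate} by linearity of expectation over all nonlocal rotations in the product formula. The total entanglement consumed by the distributed simulation is a random variable: each gate's RUS run consumes a random amount, depending on how many rounds it needs. We bound its expectation, which is how $C_K(\theta)$ is defined. Because the gates are executed sequentially and each RUS instance, conditioned on success, leaves the data state as $U_\theta\ket{\psi}$ independently of the history, the expected total cost equals the sum of the per-gate expected costs $C_K(\theta_{\alpha,x})$, with $\theta_{\alpha,x}=\tau c_{\alpha,x}J_x$.

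First, I check that the hypothesis $|\theta|\leqslant 1$ of Lemma~\ref{lem:per_gate} holds for every gate. Since $|c_{\alpha,x}|\leqslant 1$, it suffices that $\tau|J_x|\leqslant 1$, i.e.\ $r\geqslant t\max_x|J_x|$. This is automatic in the regime of interest, since achieving error $\epsilon$ requires $r=\mathcal O(t^{1+1/q}\epsilon^{-1/q})$, which exceeds $t\max_x|J_x|$ for small $\epsilon$. Otherwise we simply increase $r$, which only reduces the Trotter error. If a gate ever had $|\theta|>1$, I would instead use periodicity: $U_\theta$ depends on $\theta$ modulo $\pi$ up to a local Pauli sign. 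Alternatively, I would note that teleportation costs 2 ebits $\leqslant 2|\theta|$, so the linear bound survives with a different constant.

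Next, I apply Lemma~\ref{lem:per_gate} to each gate. This gives per-step cost at most $9\tau\sum_{\alpha}\sum_{x\in\mathfrak X_\alpha}|c_{\alpha,x}J_x|\leqslant 9\tau\nu_q\sum_{x}|J_x|$, as already computed before the theorem. Multiplying by $r$ yields Eq.~\eqref{eq:RUS_Upper}, so $C_{\mathrm{RUS}}\leqslant 9\nu_q t\sum_x|J_x|$. Since $\nu_q$ depends only on the fixed order $q$, this is $\mathcal O(t\sum_x|J_x|)$, independent of $r$ and of $\epsilon$.

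Finally, I address the choice of the truncation $K$ and the state preparation. The lemma's bound is uniform in $K$, so taking $K=\lceil\log_2(r\nu_q|\mathfrak X|/\delta)\rceil$ for overall success probability $1-\delta$ does not affect the cost. For the entanglement cost of each $\ket{\phi_\theta}$, I invoke asymptotic dilution, which prepares it at rate $h_2(\sin^2\theta)$ plus a vanishing overhead across the many repeated copies; the finite-accuracy error is absorbed into the simulation error budget.

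I expect the main subtlety to be justifying that expected cost is the right figure of merit, rather than the worst case over the $K$ rounds, together with the dilution accounting. For the former, I would add a concentration remark. The per-gate costs are independent and bounded by $\max_j E_j\leqslant 1$, and there are many of them, so by a Hoeffding/Bernstein bound the realized total concentrates around its mean with $\mathcal O(\sqrt{\cdot})$ fluctuations. The high-probability cost is therefore also $\mathcal O(t\sum_x|J_x|)$.
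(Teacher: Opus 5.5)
Your argument is correct and is essentially the paper's proof: apply Lemma~\ref{lem:per_gate} to each rotation angle $\tau c_{\alpha,x}J_x$, sum over the $\nu_q$ sub-layers using $|c_{\alpha,x}|\leqslant 1$, and multiply by $r$ to get $C_{\mathrm{RUS}}\leqslant 9\nu_q t\sum_{x}|J_x|$. (The paper simply assumes every elementary angle satisfies $|\theta|\leqslant 1$, which your remark about taking $r\geqslant t\max_x|J_x|$ covers.)

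One caution concerns your optional concentration aside, which the theorem does not need since it is a statement about expected cost. A gate's realized cost is the cumulative sum $\sum_{j=1}^{R}E_j$ over the $R$ rounds actually used, so it is bounded by $K$, not by $\max_j E_j\leqslant 1$. In any case, a range-based Hoeffding bound over $M=r\nu_q|\mathfrak{X}|$ gates gives fluctuations of order $\sqrt{M}$, which grow with $r$. An $r$-independent high-probability bound would instead need a Bernstein-type argument using the per-gate variance $\mathcal{O}(|\theta|)$ and the geometric tail $\Pr[R>j]=2^{-j}$ of the number of rounds.
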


This scaling contrasts sharply with naive teleportation-based distributed simulation, where every nonlocal gate incurs a fixed entanglement cost and the total resource requirement grows with the number of Trotter steps $r$. 
As the Trotterization is refined, this fixed per-gate cost accumulates into a substantial overhead, revealing a central advantage of our RUS-based approach. 

Known communication-complexity lower bounds show that there exist bipartite
Hamiltonian simulation instances for which simulating \(e^{-iHt}\) to constant
error requires \(\Omega(t)\) qubits of quantum communication across the
bipartition~\cite{feng2024distributedquantumsimulation}.
This result follows from a reduction to the \textsc{inner product} (Boolean function) problem in the regime $\| H\|t \leqslant n$, where $n$ is the number of qubits available to a single quantum processing unit.
The communication bound in turn translates directly into a lower bound on the required entanglement:
a distributed simulation protocol consuming $n$ ebits can be recast as a quantum communication protocol by locally preparing $n$ maximally entangled pairs and transmitting one half of each pair across the bipartition using $n$ noiseless qubit channels.
The entanglement cost is therefore lower-bounded by
$
 \Omega(t).
$
Together with the upper bound in Eq.~\ref{eq:Upper}, this establishes the optimal linear scaling with evolution time.

\begin{corollary}[{\bf Optimal Time-Scaling for DQS}]
\label{cor:main}
There exists a distributed quantum simulation case for which the $q$th-order product formula in Eq.~\eqref{eq:Suzuki-Trotter} achieves the optimal linear dependence of entanglement cost on evolution time,
\begin{align}\label{eq:Optimal}
    C_{\mathrm{RUS}}=\Theta(t).
\end{align}
This scaling is attained when all nonlocal Pauli rotations are implemented using the RUS protocol.
\end{corollary}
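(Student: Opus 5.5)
The corollary combines the upper bound of Theorem~\ref{thm:main} with the communication-complexity lower bound quoted just before it, applied to one common instance. The proof therefore has three parts: fix the instance, get $\mathcal{O}(t)$ from Theorem~\ref{thm:main}, and get $\Omega(t)$ from the reduction to quantum communication.

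\textbf{Step 1: fixing the instance.} I would take the hard instance of~\cite{feng2024distributedquantumsimulation}, which comes from the \textsc{inner product} reduction. In that instance $H=H_A+H_B+H_{AB}$ has a cross-cut term of the form in Eq.~\eqref{eq:H_AB}. The number of terms $|\mathfrak{X}|$ and the couplings $|J_x|$ are bounded independently of $t$, for example normalised so that $\sum_x|J_x|=\mathcal{O}(1)$. The evolution time is taken in the regime $\|H\|t\leqslant n$ where that lower bound holds. If the original instance places some interaction inside $H_A$ or $H_B$ rather than across the cut, this only helps, since local dynamics cost no entanglement.

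\textbf{Step 2: upper bound.} For this instance, Eq.~\eqref{eq:RUS_Upper} gives
\[
C_{\mathrm{RUS}}\leqslant 9\nu_q\Big(\sum_x|J_x|\Big)t=\mathcal{O}(t).
\]
Two points need checking here. First, $\nu_q$ depends only on the order $q$. Second, Lemma~\ref{lem:per_gate} requires $|\theta|\leqslant1$, which holds once $r\geqslant t\max_x|J_x|$; any $r$ large enough to reach the target Trotter error satisfies this. Choosing $K$ as stated after Lemma~\ref{lem:per_gate} makes all gates succeed with probability at least $1-\delta$. The dilution error of Appendix~\ref{app:finite_accuracy_dilution} is absorbed into the constant error budget. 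The overall error, Trotter plus failure plus dilution, is then a constant small enough for the lower-bound reduction to apply.

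\textbf{Step 3: lower bound.} This is the delicate step. Any LOCC protocol that consumes $C$ ebits and outputs $e^{-iHt}$ to constant error can be turned into a quantum communication protocol: Alice prepares the Bell pairs herself and sends half of each to Bob, which uses $C$ qubits in addition to the classical messages. Classical bits must also be accounted for. I would handle them by the standard fact that the \textsc{inner product} lower bound of $\Omega(n)$ holds even with free classical communication and prior entanglement, as in Cleve--van Dam--Nielsen--Tapp. Equivalently, in entanglement-assisted communication with classical messages the entanglement itself must be $\Omega(t)$. Since the RUS protocol's consumption is an expected value, I would also apply Markov's inequality: the protocol can be truncated at a constant multiple of its expected consumption at the cost of a constant increase in error. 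This gives $C_{\mathrm{RUS}}=\Omega(t)$, and combined with Step~2 yields $\Theta(t)$.

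The main obstacle is Step~3. One must ensure that the hard instance's $\sum_x|J_x|$ does not grow with $t$ (or $n$), and that the lower bound survives both classical communication and the expected-versus-worst-case distinction. If the instance's coupling norm scales with $n$, I would instead state the result as $\Theta(t)$ at fixed $n$ in the regime $t\lesssim n/\|H\|$.
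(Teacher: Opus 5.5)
\textbf{There is a genuine gap in Step~3.} Your outline is the paper's own argument: the upper bound from Eq.~\eqref{eq:RUS_Upper} (Theorem~\ref{thm:main}), plus the cited $\Omega(t)$ quantum-communication bound of~\cite{feng2024distributedquantumsimulation} with each ebit turned into one transmitted qubit. The paper stops at that conversion and never discusses the classical messages of the LOCC protocol. You are right that they must be accounted for, but your fix does not work.

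The ``standard fact'' you invoke is false. With free classical communication, \textsc{inner product} is trivial: Alice sends her $n$ input bits, and no qubits or entanglement are needed. Cleve--van Dam--Nielsen--Tapp prove the opposite trade-off: linear \emph{communication} is needed even when prior entanglement is free. Your ``equivalent'' reformulation, that entanglement-assisted protocols with classical messages need $\Omega(t)$ ebits, fails for the same reason whenever the task is evaluating a Boolean function.

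Counting the classical bits as communication does not rescue the argument either. A bound $C+c=\Omega(t)$, with $c$ the number of classical bits, says nothing about $C$ for RUS. Every round exchanges $m_A,m_B$, and there are $M=r\nu_q|\mathfrak X|$ attempts, so RUS already sends $\Omega(r)$ classical bits. Since $r$ grows like $t^{1+1/q}\epsilon^{-1/q}$, the classical bits alone saturate the bound.

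\textbf{What is missing} is an entanglement lower bound that holds regardless of how much classical communication is used. It has to use the coherent nature of the simulation task, not the classical function it encodes. The right tool is the entangling-power principle the paper itself states for single gates: LOCC cannot create entanglement, so any LOCC realization of $U$ consumes at least $E_{\rm gen}(U)$ ebits. You therefore need $E_{\rm gen}(e^{-iHt})=\Omega(t)$ for the hard instance.

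Suppose $e^{-iHt}$ applies the phase $(-1)^{x\cdot y}$ on $k=\Theta(t)$ qubit pairs, which you must verify for the cited instance. Then it maps the product state $\ket{+}^{\otimes k}\ket{+}^{\otimes k}$ to a maximally entangled state with $k$ ebits. Entanglement of formation is non-increasing on average under LOCC, convex, and continuous. Hence any implementation within constant error $\epsilon$ must consume at least $(1-\mathcal O(\epsilon))k-\mathcal O(1)=\Omega(t)$ ebits, however much classical communication it uses. Your Markov truncation then transfers this worst-case bound to the expected RUS consumption, and the rest of your outline stands.
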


\begin{figure}[t]
    \centering
    \includegraphics[width=\linewidth]{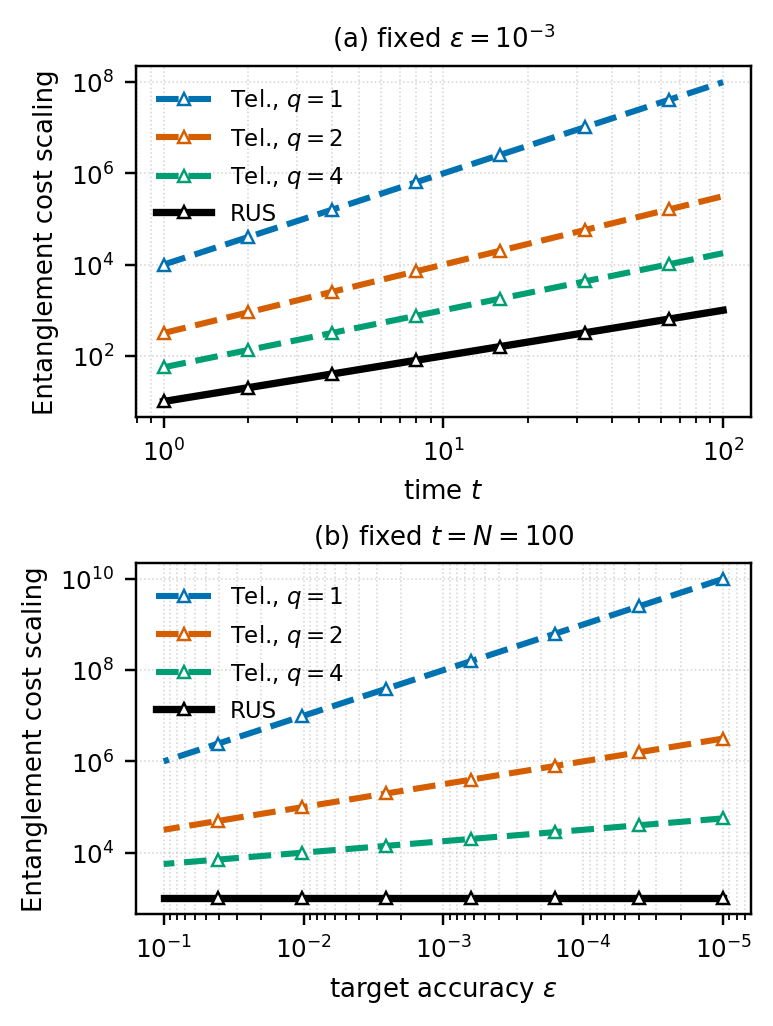}
    \caption{\textbf{Upper-bound scaling of the entanglement cost for distributed Trotterization.}
    The plotted quantity is an upper-bound scaling of entanglement cost for simulating a 2-dimensional lattice with $N=100$. Fixed prefactors, including $\nu_q$, the commutator factor $\alpha_q$, and other constants independent of the plotted variables, are omitted.
    (a) At fixed target error $\epsilon$, the cost of standard teleportation-based implementation grows superlinearly with evolution time $t$, whereas the RUS cost remains linear in $t$.
    (b) At fixed $t=N=100$, the cost of standard teleportation-based implementation increases as $\epsilon^{-1/q}$ with increasing simulation accuracy, whereas the RUS cost is independent of $\epsilon$.
    Dashed curves correspond to teleportation-based distributed simulation for product-formula orders $q=1,2,4$; 
    the solid black line denotes RUS.}
    \label{fig:teleport_vs_rus}
\end{figure}

\noindent \textbf{Resource Benchmark}---A direct comparison between the RUS and teleportation-based approaches reveals a fundamental difference in their resource scaling, and a further advantage of RUS emerges once simulation accuracy is taken into account.
Approximating the target evolution $e^{-iHt}$ with the $q$th-order product formula in Eq.~\eqref{eq:Suzuki-Trotter} incurs a simulation error $\epsilon$ that scales as $\epsilon = \mathcal{O}(\alpha_{q}\,t^{q+1}/r^q)$, where $\alpha_{q}$ denotes the relevant sum of \((q+1)\)-fold nested
commutator norms~\cite{childsTheoryTrotterError2021}.
A sufficient choice to achieve target accuracy $\epsilon$ is
\begin{align}
    r = \mathcal{O}(\alpha_{q}^{1/q}\,\epsilon^{-1/q}\,t^{1+1/q})
\end{align}
Trotter steps.

For standard teleportation-based simulation, each nonlocal gate carries a fixed cost of 2 ebits.  With $r\nu_q|\mathfrak{X}|$ such gates required to realize the full evolution, the total entanglement cost $C_{\mathrm{TEL}}^{(q)}$ becomes
\begin{align}\label{eq:q_TEL}
    C_{\mathrm{TEL}}^{(q)}=
    \mathcal{O}(\alpha_{q}^{1/q}\,\epsilon^{-1/q}\nu_q|\mathfrak{X}|\,t^{1+1/q}).
\end{align}
By contrast, the RUS entanglement cost $C_{\mathrm{RUS}}^{(q)}$ is bounded by Eq.~\eqref{eq:RUS_Upper}.
Defining $|J_x|=\mathcal{O}(1)$, this immediately leads to
\begin{align}\label{eq:q_RUS}
    C_{\mathrm{RUS}}^{(q)}=\mathcal{O}(\nu_q|\mathfrak{X}|\,t).
\end{align}
At fixed order $q$, the teleportation cost grows as $\epsilon^{-1/q}$ as the target error is reduced, whereas the RUS cost remains independent of $\epsilon$.
The resource advantage of RUS therefore grows with the required simulation accuracy.

Consider bounded-strength Hamiltonian on a two-dimensional $N^{1/2}\times N^{1/2}$ square lattice, with $|J_x|=\mathcal{O}(1)$.
A bipartition cuts $|\mathfrak{X}|=\mathcal{O}(N^{1/2})$ interaction terms.
For simplicity, we suppress the fixed $q$-dependent factors $\alpha_{q}$ and $\nu_q$, so that the two costs scale as $C_{\mathrm{TEL}}^{(q)}=\mathcal{O}(\epsilon^{-1/q}N^{1/2}t^{1+1/q})$ (see Eq.~\eqref{eq:q_TEL}) and $C_{\mathrm{RUS}}^{(q)}=\mathcal{O}(N^{1/2}t)$ (see Eq.~\eqref{eq:q_RUS}).
Figure~\ref{fig:teleport_vs_rus} makes this separation clear for $N=100$.
At fixed accuracy, the teleportation cost grows superlinearly with evolution time, whereas the RUS cost remains linear in $t$, causing the resource gap to widen for longer simulations (see Fig.~\ref{fig:teleport_vs_rus}(a)).
At fixed $t$, increasing the required accuracy leads to a second separation: 
the teleportation cost grows as $\epsilon^{-1/q}$ and diverges as $\epsilon\to0$, while the RUS cost remains constant (see Fig.~\ref{fig:teleport_vs_rus}(b)).
Hence, RUS achieves the same simulation task with progressively less entanglement than teleportation as either the evolution time or the required accuracy increases.

\noindent \textbf{Multipartite Generalization}---The analysis so far has focused on bipartite interactions. 
A broader setting arises when a nonlocal rotation acts jointly across $m$ spatially separated parties,
$
    V_\theta 
    \coloneqq
    e^{-\ii\theta P_1\otimes P_2\otimes\cdots\otimes P_m},
 $
where $P_j^2=\id$ acts locally on party $j$.
In this setting, the weak Bell pair is replaced by the GHZ-type entangled state $\ket{\mathrm{GHZ}_\theta}\coloneqq\cos\theta\ket{0}^{\otimes m}  - \ii\sin\theta\ket{1}^{\otimes m}$.
Each party uses its share of this state to control the corresponding local operation $P_j$, after which the resource qubits are measured in the Hadamard basis.
For an outcome string $x\in\{0,1\}^m$, the resulting operation on the data system is
\begin{align}
    \left(\cos\theta\right)\id - 
    \ii\left((-1)^{|x|} \sin\theta\right) P_1\otimes\cdots\otimes P_m.
\end{align}
The parity of $x$ therefore fixes the sign of the rotation: 
even parity realizes $V_\theta$, whereas odd parity yields $V_\theta^\dagger$.

An odd-parity outcome can be corrected by the same angle-doubling procedure used in bipartite RUS, giving a success probability $1-2^{-K}$ after $K$ rounds. 
The RUS mechanism therefore carries over from bipartite to genuinely multipartite nonlocal rotations without changing its basic structure.
The associated resource cost, however, becomes network dependent, as preparing and distributing multipartite entangled states $\ket{\mathrm{GHZ}_\theta}$ depends on both the network architecture and the chosen measure of multipartite entanglement. 
A full characterization of this cost lies beyond the bipartite analysis developed here and is left for future work.

\noindent \textbf{Discussion}---Reducing Trotter error is usually understood to come at a resource cost~\cite{zhaoHamiltonianSimulationRandom2021,zeng2025simple}: 
finer time steps, higher-order formulas or other variants introduce more nonlocal gates.
In distributed simulation, this has been assumed to require proportionally more entanglement, because teleportation assigns the same cost to every nonlocal operation. 
Our results show that this link is not fundamental. 
The relevant resource can instead track the physical interaction accumulated during the evolution, rather than the number of gates introduced by a particular decomposition. 
RUS realizes this principle by matching the entanglement of the resource state to the rotation strength: 
as the Trotterization is refined, more nonlocal rotations appear, but each becomes correspondingly weaker, leaving the total expected cost unchanged. 
Distributed Trotterization can therefore be refined to arbitrarily high accuracy without increasing its leading entanglement requirement. 
For any fixed product formula order, this requirement remains linear in evolution time and independent of the Trotter error, matching the lower bound set by quantum communication complexity.

For multipartite distributed quantum simulation, the present construction uses weak GHZ-type resources shared among the participating nodes. 
The remaining question is how to quantify the cost of preparing these resources from pairwise ebits on a given network, where the answer can depend on the network topology and on the routing structure connecting the nodes.
Establishing this topology-dependent entanglement cost would provide an operational resource theory for multipartite distributed simulation.
Further savings may be possible by exploiting structure within a simulation layer. 
Instead of implementing interaction terms independently, global protocols may make use of commutativity or cancellations among them, potentially reducing the constant prefactor in the entanglement cost \cite{Xiaomingsun11575660}.
More broadly, it remains open whether the same entanglement resource scaling extends beyond product formulas to simulation methods, including  linear combinations of unitaries \cite{berrySimulatingHamiltonianDynamics2015, childsHamiltonianSimulationUsing2012, longDualityQuantumComputing2011,sun2026high}, quantum signal processing and quantum singular value transformation \cite{lowHamiltonianSimulationQubitization2019,lowOptimalHamiltonianSimulation2017, Gily_n_2019}. Beyond entanglement, extending the analysis to fault-tolerant setting, where non-Clifford resources such as magic states and their associated implementation overheads become relevant, offers a further direction towards a complete resource picture of distributed quantum simulation \cite{yuan2026distributedresourcetheoryentanglement}.

\begin{acknowledgments}
T. F. thanks Zhong-Xia Shang, Xiangyi Meng, Wenjun Yu, Chenfeng Cao and Jue Xu for their helpful discussion. 
Q. Z. acknowledges funding from Quantum Science and Technology-National Science and Technology Major Project 2024ZD0301900, National Natural Science Foundation of China (NSFC) via Project No. 12347104 and No. 12305030, Hong Kong Research Grant Council (RGC) via No. 27300823, 17310926, N\_HKU718/23, and R6010-23. T. F. and Q. Z. thank the support of the CCF-Tencent Rhino-Bird Open Research Fund (No. T105-TEG-2024080200001). Y. X. is supported by A*STAR under its Career Development Fund (C243512002).
\end{acknowledgments}

\bibliography{ref_new,ref,refer,ref_aps}

\newpage
\begin{widetext}
    
\appendix

\newpage

\section{Repeat-until-success protocol for nonlocal rotation gates}

The complete procedure for the Repeat-until-success (RUS) protocol to implement a nonlocal rotation gate $U_\theta =e^{-i\theta P_A\otimes P_B }$ is summarized in Alg.~\ref{alg:RUS_Appendix}. 

\begin{algorithm}[H]
\caption{RUS Protocol}
\label{alg:RUS_Appendix}
\begin{algorithmic}[1]

\Require Data state $\ket{\psi}_{AB}$; 
Pauli strings $P_A$ and $P_B$;
target angle $\theta$; 
desired success probability threshold $1-\delta\in(0,1)$
\Ensure $U_\theta\ket{\psi}$ upon success, with
$\Pr[\textsc{success}]\geqslant 1-\delta$

\State $K
\gets
\left\lceil
\log_2\!\left(\frac{1}{\delta}\right)
\right\rceil$

\For{$j=0,1,\ldots,K-1$}
    \State Set $\alpha_j \gets 2^j\theta$

    \State Alice and Bob prepare the entangled state $\ket{\phi_{\alpha_j}}\coloneqq\cos\alpha_j\ket{00}-\ii\sin\alpha_j\ket{11}$

    \State Alice and Bob apply controlled-$P_A$ and controlled-$P_B$,
    respectively, using $\ket{\phi_{\alpha_j}}$ as controls

    \State Measure $\ket{\phi_{\alpha_j}}$ in the Hadamard basis, obtaining $m_A, m_B\in\{0,1\}$

    \If{$m_A\oplus m_B=0$}
        \State \Return \textsc{success}
        \Comment{the evolution is $U_\theta$}
    \Else
        \State Continue to the next round
        \Comment{the accumulated evolution is
        $U_{-(2^{j+1}-1)\theta}$}
    \EndIf
\EndFor

\State \Return \textsc{failure}
\Comment{$U_{-(2^{K}-1)\theta}$,
with probability $2^{-K}\leqslant \delta$}

\end{algorithmic}
\end{algorithm}

\section {Finite-time RUS cost of Trotterization}
\label{app:longtime_rus_cost}

In this appendix we prove the linear-in-time cost bound used in
Theorem~\ref{thm:main}. The only estimate needed for the
Trotterization cost is an upper bound on the expected RUS cost of one weak
rotation. Recall that
\begin{equation}
    C_K(\theta)
    =
    \sum_{j=0}^{K-1}2^{-j}
    h_2\!\left(\sin^2(2^j\theta)\right),
\end{equation}
and define
$
    C_\infty(\theta)
    =
    \sum_{j=0}^{\infty}2^{-j}
    h_2\!\left(\sin^2(2^j\theta)\right).
$

\begin{lemma}[Lemma 1 in the main text]
\label{lem:Cinfty_upper}
For every $K\ge1$ and every $0<|\theta|\le1$,
\begin{equation}
    C_K(\theta)
    \le
    C_\infty(\theta)
    \le
    9|\theta|.
    \label{eq:CK_upper_9}
\end{equation}
\end{lemma}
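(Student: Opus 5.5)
We must bound $C_\infty(\theta)=\sum_{j\ge0}2^{-j}h_2(\sin^2(2^j\theta))\le 9|\theta|$. The first inequality $C_K\le C_\infty$ is immediate because every summand is nonnegative. By symmetry ($h_2(\sin^2(\cdot))$ is even) we may take $0<\theta\le1$.

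The plan is to split the sum at the index where $2^j\theta$ becomes of order one. Let $J\ge0$ be the unique integer with $2^{-J-1}<\theta\le 2^{-J}$.

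\emph{Tail $j> J$.} Here we use only $h_2\le1$. This gives
\begin{equation}
\sum_{j>J}2^{-j}\le 2^{-J}<2\theta .
\end{equation}

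\emph{Head $j\le J$.} Here $x_j:=2^j\theta\le1$, so we need a bound of the form $h_2(\sin^2 x)\le c\,x\log_2(e/x)$ or similar for $0<x\le1$. The natural route is as follows:
\begin{itemize}
\item Use $\sin^2x\le x^2$, and note that $h_2$ is increasing on $[0,1/2]$; since $\sin^2 1\approx0.708$, restrict this step to $x^2\le1/2$ and handle the values near $1$ using $h_2\le1$.
\item Use $h_2(p)\le p\log_2(e/p)$, which follows from $-(1-p)\log_2(1-p)\le p\log_2 e$.
\end{itemize}
Together these give $h_2(\sin^2x)\le x^2\log_2(e/x^2)$, and a short check extends this to all $x\le1$ up to a constant. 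Then
\begin{equation}
2^{-j}h_2(\sin^2 x_j)\le 2^{-j}x_j^2\log_2(e/x_j^2)=\theta\, x_j\log_2(e/x_j^2).
\end{equation}
Set $y_k=2^{-k}\ge x_{J-k}$ for $k=J-j$. The function $x\log_2(e/x^2)$ is increasing on $(0,1]$, since its derivative is $\log_2(e/x^2)-2\log_2 e>0$ iff $x<e^{-1/2}$. That monotonicity fails near $1$, so it is cleaner to bound the function directly by $\max(x,\cdot)$. The head is then at most
\begin{equation}
\theta\sum_{k\ge0}2^{-k}\bigl(\log_2 e+2k\bigr)=\theta\,(2\log_2e+4).
\end{equation}
(A small correction arises for the top term, where $x_J\in(1/2,1]$; there we bound $h_2\le1$, contributing $2^{-J}<2\theta$.) Summing the pieces gives a constant around $2\log_2 e+4+2+2\approx 10.9$, slightly above $9$.

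The main obstacle is therefore constant bookkeeping to reach $9$. To tighten, I would do the following:
\begin{itemize}
\item Treat the top one or two head terms ($x\in(1/4,1]$) using the exact value of $h_2(\sin^2x)$ rather than crude bounds, i.e. use $h_2(\sin^2x)\le \min\{1, x^2\log_2(e/x^2)\}$ evaluated sharply.
\item Bound the geometric tail using $h_2\le1$ only past the true threshold.
\item Alternatively, use the sharper estimate $h_2(p)\le 2\sqrt{p(1-p)}$, which gives $h_2(\sin^2x)\le|\sin 2x|\le 2x$.
\end{itemize}
The last option is very clean: it yields $2^{-j}h_2\le 2\theta$ per term for $j\le J$, but this costs a factor $J\sim\log(1/\theta)$, so it must be combined with the log-type bound for small $x_j$. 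I expect a hybrid argument to work: use $x^2\log_2(e/x^2)$ for $x_j\le1/4$, $|\sin2x|$ for $x_j\in(1/4,1]$ (at most two terms, each $\le2\theta$), and $h_2\le1$ beyond. This gives roughly $\theta\bigl(\sum_{k\ge2}2^{-k}(\log_2e+2k)\bigr)+4\theta+2\theta$, with the first sum about $2.7$, for a total $\lesssim 9\theta$.
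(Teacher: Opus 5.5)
\textbf{Gap: the constant $9$ is not established.} Your decomposition matches the paper's: split at $J$ with $2^J|\theta|\le1<2^{J+1}|\theta|$, bound the tail by $2^{-J}<2|\theta|$, and bound the head via $h_2(p)\le p\log_2(e/p)$. However, neither of your routes reaches the constant $9$ that the lemma asserts.

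\begin{itemize}
\item Your first route gives about $10.9|\theta|$. Even without double counting the top term, it gives $(\log_2 e+8)|\theta|\approx9.44|\theta|$.
\item Your hybrid route rests on an arithmetic slip. In fact $\sum_{k\ge2}2^{-k}(\log_2e+2k)=\tfrac12\log_2e+3\approx3.72$, not $2.7$. The hybrid total is therefore about $(3.72+4+2)|\theta|\approx9.72|\theta|>9|\theta|$.
\item The inequality $h_2(p)\le2\sqrt{p(1-p)}$ you invoke is true (it follows, e.g., from $h_2(p)\le(4p(1-p))^{1/\ln4}$), but it would also need proof.
\end{itemize}
As written, you prove $C_\infty(\theta)=\mathcal O(|\theta|)$. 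That suffices for the linear-in-time scaling of the main theorem, but not for the lemma as stated.

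\textbf{Missing idea.} Do not replace $x_J=b:=2^J|\theta|\in(1/2,1]$ by $1$ term by term. Keep $b$ as a parameter and bound the whole head at once: it is at most $|\theta|\sum_{n\ge0}b2^{-n}\bigl(\log_2(e/b^2)+2n\bigr)=|\theta|\,\Phi(b)$, with $\Phi(b)=b\bigl[2\log_2(e/b^2)+4\bigr]$.

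The $n=0$ term $b\log_2(e/b^2)$ is indeed decreasing for $b>e^{-1/2}$; this is the obstruction you correctly spotted. The $n\ge1$ terms compensate for it: $\Phi'(b)=4-2\log_2e-4\log_2b\ge4-2\log_2e>0$ on $(1/2,1]$, so $\Phi(b)\le\Phi(1)=2\log_2e+4$. This is exactly the head bound you wrote down but could not justify. Adding the tail $2^{-J}=|\theta|/b<2|\theta|$ gives $(2\log_2e+6)|\theta|\approx8.89|\theta|<9|\theta|$, which is the paper's argument.

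Your restriction to $x^2\le1/2$ is also unnecessary. The function $g(p)=p\log_2(e/p)$ satisfies $g'(p)=-\log_2p\ge0$ on $(0,1]$. Hence $h_2(\sin^2x)\le g(\sin^2x)\le g(x^2)=x^2\log_2(e/x^2)$ holds for every $|x|\le1$, with no additional constant and no need for monotonicity of $h_2$ itself.
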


\begin{proof}
Let $f(x)=h_2(\sin^2 x)$. We use the elementary
entropy bound
$
    h_2(p)
    \le
    p\log_2\frac{e}{p}
    $ with $ 0<p\le1,$
which follows from $-(1-p)\ln(1-p)\le p$. Hence, for $|x|\le1$,
\begin{equation}
    f(x)
    =
    h_2(\sin^2 x)
    \le
    x^2\log_2\frac{e}{x^2}.
    \label{eq:f_bound_simple}
\end{equation}

Choose $J\ge0$ such that
$ 2^J|\theta|\le1<2^{J+1}|\theta|,$
and write $b=2^J|\theta|$, so that $b\in(1/2,1]$. We split the infinite sum at
$j=J$. For $j\le J$, set $n=J-j$. Then $2^j|\theta|=b2^{-n}$, and
Eq.~\eqref{eq:f_bound_simple} gives
\begin{align}
    \sum_{j=0}^{J}2^{-j}f(2^j\theta)
    &\le
    |\theta|
    \sum_{n=0}^{J}
    b2^{-n}
    \left(
        \log_2\frac{e}{b^2}
        +2n
    \right)
    \nonumber \le
    |\theta| b
    \left[
        2\log_2\frac{e}{b^2}
        +4
    \right].
    \label{eq:small_angle_sum_bound}
\end{align}
Here we used $\sum_{n\ge0}2^{-n}=2$ and
$\sum_{n\ge0}n2^{-n}=2$. Define function
$
    \Phi(b)
    =
    b\left[
        2\log_2\frac{e}{b^2}
        +4
    \right]$ and one can verify that$ \Phi(b)$
is increasing on $b\in(1/2,1]$ since
$ \Phi'(b)
    >
    0.$
Therefore $\Phi(b)\le \Phi(1)=2\log_2 e+4$, and
\begin{equation}
    \sum_{j=0}^{J}2^{-j}f(2^j\theta)
    \le
    (2\log_2 e+4)|\theta|.
    \label{eq:small_angle_sum_final}
\end{equation}

For the tail $j>J$, we use $0\le f(x)\le1$:
\begin{equation}
    \sum_{j=J+1}^{\infty}2^{-j}f(2^j\theta)
    \le
    \sum_{j=J+1}^{\infty}2^{-j}
    =
    2^{-J}.
\end{equation}
Since $b=2^J|\theta|>1/2$, we have $2^{-J}=|\theta|/b<2|\theta|$. Combining this with
Eq.~\eqref{eq:small_angle_sum_final}, we obtain
\begin{equation}
    C_\infty(\theta)
    \le
    (2\log_2 e+6)|\theta|
    <
    9|\theta|.
\end{equation}
Finally, $C_K(\theta)\le C_\infty(\theta)$ because all summands are
nonnegative. This proves the claim.
\end{proof}

\subsection{Cost of Trotterization with the RUS protocol}
\label{subsec:rus_trotter_cost}

We now apply the single-gate RUS cost bound to a full product-formula
simulation.  We state the result for a general fixed-order product formula,
of which first-order Trotterization is a special case.

Let $q$denote the product-formula order and let $\tau=t/r$.  In one
Trotter segment, the nonlocal part of the formula can be written
schematically as a sequence of nonlocal Pauli rotations \cite{childsTheoryTrotterError2021}
\begin{equation}
    \exp[-i\theta_{\alpha,x}P_x],
    \qquad
    \theta_{\alpha,x}
    =
    a_{\alpha,x}J_x\tau
    =
    a_{\alpha,x}J_x\frac{t}{r},
    \label{eq:general_trotter_angle}
\end{equation}
where $\alpha=1,\ldots,\nu_q$ labels the nonlocal sweeps in one segment,
$\mathfrak{X}_\alpha$ is the set of nonlocal terms appearing
in sweep $\alpha$, and $a_{\alpha,x}$ are fixed formula-dependent
coefficients. For fixed product-formula order $q$, the number of sweeps
$\nu_q$ and the coefficients $a_{\alpha,x}$ are independent of $t$,
$r$, and the target precision.

For each interaction term $x$, define its total absolute coefficient weight
within one segment by
$
    A_{q,x}
    =
    \sum_{\alpha:\,x\in\mathfrak{X}_\alpha}
    |a_{\alpha,x}|.
$
We also write $A_q=\max_{x\in\mathfrak X}A_{q,x}$.  For first-order
Trotterization, $\nu_q=1$, $\mathfrak{X}_1=\mathfrak{X},$ $
a_{1,x}=1$, and hence $A_{q,x}=1$. For the standard, unmerged product-formula implementation considered here, each
nonlocal sublayer contains the same set of cross-cut interactions,
$\mathfrak X_\alpha=\mathfrak X$. Thus one segment contains
$\nu_q|\mathfrak X|$ nonlocal Pauli rotations. Possible cancellations,
commuting-group optimizations, or mergers between adjacent segments can only
reduce this number and do not affect the upper bounds derived below.

For one nonlocal rotation with angle $\theta_{\alpha,x}$, the expected cost
of a \(K\)-round RUS attempt is
\begin{equation}
    C_K(\theta_{\alpha,x})
    =
    \sum_{j=0}^{K-1}
    2^{-j}
    h_2\!\left(
        \sin^2(2^j\theta_{\alpha,x})
    \right).
    \label{eq:CK_general_angle}
\end{equation}
Therefore the total expected ebit cost over all $r$ segments is
\begin{equation}
    C_{\rm tot}^{(K)}
    =
    r
    \sum_{\alpha=1}^{\nu_q}
    \sum_{x\in\mathfrak{X}_\alpha}
    C_K(\theta_{\alpha,x}).
    \label{eq:Ctot_general_q_xxact}
\end{equation}

Each $K$-round RUS attempt fails only if all $K$ heralded signs are
unfavorable, which occurs with probability $2^{-K}$.  Indeed, for a
correction angle $\alpha'$, the measurement outcome $x\in\{0,1\}^{k_x}$
induces the Kraus operator
\begin{equation}
    M_x
    =
    2^{-k_x/2}
    \left(
        \cos\alpha'\,I
        -
        i(-1)^{|x|}\sin\alpha'\,P_x
    \right),
\end{equation}
and hence $M_x^\dagger M_x=2^{-k_x}I$.  Thus each parity occurs with
probability $1/2$, independently of the input state and independently of the
angle.  The conclusion does not require the correction angles
$2^j\theta_{\alpha,x}$ to remain small.

The total number of nonlocal RUS attempts in the circuit is
$
    M
    =
    r
    \sum_{\alpha=1}^{\nu_q}
    |\mathfrak{X}_\alpha|.
$
By the union bound,
$
    P_{\rm succ}
    \ge
    1-M2^{-K}.
$
Thus choosing
$
    K
    =
    \left\lceil
        \log_2\frac{M}{\delta}
    \right\rceil
$
ensures \(P_{\rm succ}\ge1-\delta\).

Assume now that all elementary Trotter angles satisfy
$ |\theta_{\alpha,x}|\le1$ 
for all  $\alpha,x.$
By Lemma~\ref{lem:Cinfty_upper},
$  C_K(\theta_{\alpha,x})
    \le
    9|\theta_{\alpha,x}|.$
Substituting Eq.~\eqref{eq:general_trotter_angle} into
Eq.~\eqref{eq:Ctot_general_q_xxact}, we obtain

\begin{align}
    C_{\rm tot}^{(K)}
    &\le
    9r
    \sum_{\alpha=1}^{\nu_q}
    \sum_{x\in\mathfrak{X}_\alpha}
    |a_{\alpha,x}J_x|\frac{t}{r}
    \nonumber=
    9t
    \sum_{x\in\mathfrak{X}}
    A_{q,x}|J_x|
    \nonumber\le
    9A_q t
    \sum_{x\in\mathfrak{X}}
    |J_x|.
    \label{eq:linear_time_general_q}
\end{align}
Therefore, for every fixed product-formula order $q$, the expected RUS cost
is linear in the simulated time and independent of the Trotter segment number
$r$, up to formula-dependent constants.
\begin{theorem}[Equivalent to Theorem 1 in the main text]
\label{thm:longtime_rus_cost}
Consider an $r$-segment product-formula simulation of fixed order $q $in
the bipartite setting.  Let the nonlocal rotations in one segment be
$
    \exp[-i\theta_{\alpha,x}P_x],
   $ with $
    \theta_{\alpha,x}
    =
    a_{\alpha,x}J_x t/r,
$
with \(\alpha=1,\ldots,\nu_q\) and \(x\in\mathfrak{X}_\alpha\).  Define
$
    M
    =
    r\sum_{\alpha=1}^{\nu_q}|\mathfrak{X}_\alpha|
$
and
$A_{q,x} =\sum_{\alpha:\,x\in\mathfrak{X}_\alpha}|a_{\alpha,x}|,
   $ $
    A_q=\max_{x\in\mathfrak{X}}A_{q,x}.
$
A $K$-round RUS implementation has expected ebit cost
\begin{equation}
    C_{\rm tot}^{(K)}
    =
    r
    \sum_{\alpha=1}^{\nu_q}
    \sum_{x\in\mathfrak{X}_\alpha}
    \sum_{j=0}^{K-1}
    2^{-j}
    h_2\!\left(
        \sin^2(2^j\theta_{\alpha,x})
    \right),
    \label{eq:theorem_general_cost}
\end{equation}
and success probability
$
    P_{\rm succ}\ge1-M2^{-K}.
$
Thus $K=\lceil\log_2(M/\delta)\rceil$ gives
$P_{\rm succ}\ge1-\delta$.  If all elementary angles satisfy
$|\theta_{\alpha,x}|\le1$, then
$
    C_{\rm tot}^{(K)}
    \le
    9A_qt
    \sum_{x\in\mathfrak{X}}|J_x|.
$
In particular, for fixed $q$,
\begin{equation}
    C_{\rm tot}^{(K)}
    =
    O\!\left(
        t\sum_{x\in\mathfrak{X}}|J_x|
    \right),
\end{equation}
where the hidden constant depends only on the chosen product formula.
\end{theorem}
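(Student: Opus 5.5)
The proof has three parts: the exact expected-cost formula, the success probability, and the linear-in-$t$ bound via Lemma~\ref{lem:Cinfty_upper}.

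\textbf{Cost formula.} I would first fix one nonlocal rotation $\exp[-\ii\theta_{\alpha,x}P_x]$ implemented by Alg.~\ref{alg:RUS_Appendix} truncated at $K$ rounds. Round $j$ (with $j=0,\ldots,K-1$) consumes $\ket{\phi_{2^j\theta_{\alpha,x}}}$, whose entropy is $h_2(\sin^2(2^j\theta_{\alpha,x}))$. This round is reached only if all $j$ earlier rounds returned odd parity.

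\textbf{Success probability.} Writing the Kraus operators $M_x=2^{-k_x/2}(\cos\alpha' I-\ii(-1)^{|x|}\sin\alpha' P_x)$ and using $P_x^2=I$, I get $M_x^\dagger M_x=2^{-k_x}I$. Summing over outcomes of fixed parity shows each parity occurs with probability exactly $1/2$, independent of the data state and of the angle. Hence round $j$ is reached with probability $2^{-j}$ regardless of what happened before, and by linearity of expectation the expected cost of that gate is $C_K(\theta_{\alpha,x})$.

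I would then check that an even outcome in round $j$ yields $U_\theta$. After $j$ failures the accumulated evolution is $U_{-(2^j-1)\theta}$; the Pauli rotations commute, so applying $U_{2^j\theta}$ gives $U_\theta$. Summing the gate costs over the $\nu_q$ sublayers, the terms $x\in\mathfrak X_\alpha$, and the $r$ segments gives Eq.~\eqref{eq:theorem_general_cost}.

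For the overall success probability, a single gate fails only when all $K$ independent parities are odd, which has probability $2^{-K}$. A union bound over the $M$ gate instances gives $P_{\rm succ}\ge 1-M2^{-K}$. Choosing $K=\lceil\log_2(M/\delta)\rceil$ then makes this at least $1-\delta$.

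\textbf{Linear bound.} Under the hypothesis $|\theta_{\alpha,x}|\le1$, I would invoke Lemma~\ref{lem:Cinfty_upper} termwise, giving $C_K(\theta_{\alpha,x})\le 9|a_{\alpha,x}||J_x|t/r$. The factor $r$ cancels, and regrouping the sum by $x$ produces $9t\sum_x A_{q,x}|J_x|\le 9A_q t\sum_x|J_x|$. Since $A_q$ depends only on the fixed formula, this is $O(t\sum_x|J_x|)$.

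The only delicate point is making precise that the per-round failure probability is $1/2$ even conditioned on previous outcomes and for large doubled angles. The Kraus computation above handles this because $M_x^\dagger M_x\propto I$. Everything else is bookkeeping on top of Lemma~\ref{lem:Cinfty_upper}.
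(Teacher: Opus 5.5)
Your proposal is correct and follows the paper's proof essentially step for step. The shared steps are the per-gate expected cost from the reaching probabilities $2^{-j}$, the identity $M_x^\dagger M_x=2^{-k_x}I$ giving state- and angle-independent parity probabilities, a union bound over the $M$ attempts, and termwise use of Lemma~\ref{lem:Cinfty_upper} followed by regrouping by $x$ to reach $9A_q t\sum_x|J_x|$; your explicit check that a success in round $j$ composes with the accumulated $U_{-(2^j-1)\theta}$ to give $U_\theta$ is a small addition the paper leaves to the algorithm description.
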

A coarser but simpler bound is obtained from $A_{q,x}\le\nu_q$ whenever
$|a_{\alpha,x}|\le1$. In that case
$
    C_{\rm tot}^{(K)}
    \le
    9\nu_q t\sum_{x\in\mathfrak{X}}|J_x|.
$

\subsection{Protocol-specific lower bound for the RUS cost}
\label{app:rus_protocol_lower_bound}

In the main text we only use the upper bound
$C_K(\theta)\le 9|\theta|$ to prove the linear-in-time simulation cost.
For completeness, we record a matching lower bound for the proposed RUS
protocol when the cutoff reaches a constant-angle correction scale. This is a
protocol-specific statement and should not be interpreted as a lower bound on
all possible LOCC implementations.

\begin{lemma}[Protocol-specific saturation]
\label{lem:CK_protocol_lower}
Fix a constant $a\in(0,\pi/4)$. Define
$ m_a= \min_{x\in[a,2a]} h_2(\sin^2 x).
$
Then $m_a>0$. For every $0<|\theta|\le a$, if
$
    K\ge
    \left\lceil
    \log_2\frac{a}{|\theta|}
    \right\rceil+1,$
then
$
    C_K(\theta)
    \ge
    \frac{m_a}{2a}|\theta|.$
Together with $C_K(\theta)\le9|\theta|$, this implies
\begin{equation}
    C_K(\theta)=\Theta(|\theta|)
\end{equation}
for the proposed RUS protocol whenever the cutoff reaches the constant-angle
scale.
\end{lemma}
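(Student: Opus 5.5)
The plan is to keep a single well-chosen term of the sum defining $C_K(\theta)$ and discard the rest, since every summand $2^{-j}h_2(\sin^2(2^j\theta))$ is nonnegative.

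First I show $m_a>0$. The map $x\mapsto h_2(\sin^2 x)$ is continuous on the compact interval $[a,2a]$. Because $0<a<\pi/4$, we have $2a<\pi/2$, so for every $x\in[a,2a]$ the value $\sin^2 x$ lies strictly between $0$ and $1$. The binary entropy is strictly positive on $(0,1)$, so the minimum $m_a$ is attained and is positive.

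Next I choose the index of the term to keep. Set $J\coloneqq\lceil\log_2(a/|\theta|)\rceil\ge 0$; it is nonnegative because $|\theta|\le a$. Then $2^J|\theta|\ge a$, and by minimality of the ceiling, $2^{J-1}|\theta|<a$, i.e.\ $2^J|\theta|<2a$. For $J=0$ this last inequality holds directly, since $|\theta|\le a<2a$. So $2^J|\theta|\in[a,2a)$.

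The hypothesis $K\ge J+1$ guarantees that index $j=J$ actually appears in the truncated sum $\sum_{j=0}^{K-1}$. Since $\sin^2$ is even, $h_2(\sin^2(2^J\theta))=h_2(\sin^2(2^J|\theta|))\ge m_a$. Dropping all other nonnegative terms gives
\begin{equation*}
C_K(\theta)\ \ge\ 2^{-J}m_a .
\end{equation*}

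Finally I convert $2^{-J}$ into a multiple of $|\theta|$. From $2^J|\theta|<2a$ we get $2^{-J}>|\theta|/(2a)$, hence $C_K(\theta)\ge \frac{m_a}{2a}|\theta|$. Combined with Lemma~\ref{lem:Cinfty_upper}, which applies because $|\theta|\le a<1$, this yields $C_K(\theta)=\Theta(|\theta|)$.

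The only delicate point is the bookkeeping for $J$: checking that $2^J|\theta|$ really lies in $[a,2a]$, including the edge cases $J=0$ and $|\theta|=a$, and that the condition on $K$ matches this choice of $J$. The rest is immediate.
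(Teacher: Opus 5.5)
Your proof is correct and follows essentially the same route as the paper: you pick $J=\lceil\log_2(a/|\theta|)\rceil$, check $a\le 2^{J}|\theta|<2a$, keep only that nonnegative summand to get $C_K(\theta)\ge 2^{-J}m_a>\frac{m_a}{2a}|\theta|$, and then combine this with the $9|\theta|$ upper bound. Your remarks that $\sin^2$ is even (which handles negative $\theta$) and that $a<\pi/4<1$ (so the upper bound applies) are details the paper leaves implicit.
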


\begin{proof}
Let
$
    j_*=
    \left\lceil
    \log_2\frac{a}{|\theta|}
    \right\rceil.$
Then
$
    a\le 2^{j_*}|\theta|<2a.
$
Because $a\in(0,\pi/4)$, the interval $[a,2a]$ is contained in
$(0,\pi/2)$. Hence $h_2(\sin^2 x)$ is continuous and strictly positive on
$[a,2a]$, so $m_a>0$.

If $K\ge j_*+1$, the term $j=j_*$ is included in $C_K(\theta)$. Therefore
\begin{align}
    C_K(\theta)=
    \sum_{j=0}^{K-1}
    2^{-j}h_2\!\left(\sin^2(2^j\theta)\right)
    \ge
    2^{-j_*}h_2\!\left(\sin^2(2^{j_*}\theta)\right)
    \ge
    2^{-j_*}m_a.
\end{align}
Using $2^{j_*}|\theta|<2a$, we obtain
$
    2^{-j_*}>
    \frac{|\theta|}{2a}.
$
Thus
\begin{equation}
    C_K(\theta)
    \ge
    \frac{m_a}{2a}|\theta|.
\end{equation}
The upper bound
$C_K(\theta)\le9|\theta|$ then gives the claimed $\Theta(|\theta|)$ scaling
for this RUS protocol.
\end{proof}

\section{Finite-Accuracy Preparation of Weak Entangled Resources}
\label{app:finite_accuracy_dilution}

In the main text we quote the entanglement cost of a weak Bell resource
\begin{equation}
    \ket{\phi_\alpha}
    =
    \cos\alpha\ket{00}
    -
    i\sin\alpha\ket{11}
\end{equation}
as its entropy of entanglement,
$h_2(\sin^2\alpha)$. This is the asymptotic Bell-pair rate. In an actual
finite implementation, however, the resource states need not be prepared
perfectly. It is sufficient to prepare them to an accuracy comparable to the
error budget already assigned to the corresponding Trotter layer. This
appendix makes this statement precise.

For states, we use the trace distance
$   D(\rho,\sigma)
    =
    \frac12\|\rho-\sigma\|_1.$
For channels, we use the corresponding operational distance
\begin{equation}
    D_\diamond(\mathcal N,\mathcal M)
    =
    \frac12\|\mathcal N-\mathcal M\|_\diamond .
\end{equation}

\subsection{Stability of the RUS channel under resource errors}

The first observation is that approximate resource states translate directly
into approximate implemented channels.

\begin{lemma}[Resource error implies channel error]
\label{lem:resource_to_channel_xrror}
Let $\mathcal R$ be any fixed LOCC procedure that consumes an ideal resource
state $\omega$ and implements a channel $\mathcal N_\omega$ on the data.
If the resource state is replaced by $\widetilde\omega$, producing the
channel $\mathcal N_{\widetilde\omega}$, then
\begin{equation}
    D_\diamond(\mathcal N_\omega,\mathcal N_{\widetilde\omega})
    \le
    D(\omega,\widetilde\omega).
    \label{eq:resource_channel_stability}
\end{equation}
The same bound holds if $\mathcal R$ includes classical outcome registers,
success/failure flags, or adaptive measurements.
\end{lemma}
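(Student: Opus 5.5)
The plan is to reduce the statement to contractivity of the trace norm under quantum channels. I fix the LOCC procedure $\mathcal R$ and model it, together with all its measurements, classical communication, and outcome registers, as a single CPTP map $\Lambda$. This map acts on the joint input consisting of the data system (plus an arbitrary reference system $E$) and the resource system, and outputs the data, reference, and classical registers. By construction,
\[
\mathcal N_\omega(\rho)=\Lambda(\rho\otimes\omega), \qquad \mathcal N_{\widetilde\omega}(\rho)=\Lambda(\rho\otimes\widetilde\omega).
\]
Adaptive measurements and classical feedforward are harmless here. An LOCC protocol with heralded flags is still a quantum instrument. Its flags can be recorded coherently into orthogonal classical registers $\sum_k \ketbra{k}{k}\otimes(\cdot)$, so the whole procedure is one CPTP map whose output includes the flags. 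Because these registers are orthogonal, the trace norm of the full output equals the sum of the trace norms of the individual branches, so bounding the full output also bounds the success/failure-resolved channel.

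Next I bound the diamond norm directly. For any input $\rho_{\mathrm{data},E}$, I write
\[
\bigl\|(\mathcal N_\omega-\mathcal N_{\widetilde\omega})\otimes\mathrm{id}_E(\rho)\bigr\|_1=\bigl\|(\Lambda\otimes\mathrm{id}_E)\bigl(\rho\otimes(\omega-\widetilde\omega)\bigr)\bigr\|_1 .
\]
Since $\Lambda\otimes\mathrm{id}_E$ is CPTP, the trace norm is nonincreasing under it, so the right-hand side is at most $\|\rho\otimes(\omega-\widetilde\omega)\|_1$. This equals $\|\rho\|_1\|\omega-\widetilde\omega\|_1=\|\omega-\widetilde\omega\|_1$, by multiplicativity of the trace norm over tensor products. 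Taking the supremum over $\rho$ and halving both sides gives $D_\diamond(\mathcal N_\omega,\mathcal N_{\widetilde\omega})\le D(\omega,\widetilde\omega)$.

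The main obstacle I expect is bookkeeping in how "channel" is understood when classical flags are present. If one instead views the success branch as a trace-nonincreasing map, the same argument still works. The reason is that the projection onto a flag value is itself a CP trace-nonincreasing operation, which is also contractive in trace norm for Hermitian inputs. I would state this explicitly so the bound covers both the flagged channel and each postselected branch. The remaining care is to check that the ordering of the protocol does not matter: the resource may be consumed in several rounds, and rounds may be chosen adaptively. I handle this by taking $\omega$ to be the full product of all resource states ever potentially used. That way the LOCC map can be assumed to receive the entire resource at the start and simply ignore parts that are never touched.
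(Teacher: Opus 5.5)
Your proposal is correct and follows essentially the same route as the paper. Both model the entire LOCC procedure, including flags and adaptivity, as a single CPTP map on data plus resource, apply trace-norm contractivity to $\rho_{RA}\otimes\omega$ versus $\rho_{RA}\otimes\widetilde\omega$, and take the supremum over inputs. You also spell out the step $\|\rho\otimes(\omega-\widetilde\omega)\|_1=\|\omega-\widetilde\omega\|_1$, which the paper uses implicitly.
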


\begin{proof}
The LOCC procedure, including all measurements and classical registers, is a
CPTP map acting jointly on the input data and the resource state. For any
input state $\rho_{RA}$, including an arbitrary reference system $R$, the
two output states differ only through the replacement
\(\omega\mapsto\widetilde\omega\). Contractivity of trace distance under CPTP
maps gives
\begin{equation}
    D\!\left[
        \mathcal R(\rho_{RA}\otimes\omega),
        \mathcal R(\rho_{RA}\otimes\widetilde\omega)
    \right]
    \le
    D(\omega,\widetilde\omega).
\end{equation}
Taking the supremum over $\rho_{RA}$ gives
Eq.~\eqref{eq:resource_channel_stability}.
\end{proof}

Thus, if the approximate resource package used for one nonlocal RUS attempt
is within trace distance $\delta$ of the ideal resource package, then the
implemented RUS channel differs from the ideal-resource RUS channel by at
most $\delta$ in diamond distance.

\subsection{Finite weak entangled-pair preparation by finite-accuracy dilution}

Before stating the finite-copy preparation bound, we spell out the
Schmidt-basis structure of $n$ weak Bell resources.   Let
\begin{equation}
    \ket{\phi_\alpha}
    =
    \cos\alpha\ket{0}_A\ket{0}_B
    -
    i\sin\alpha\ket{1}_A\ket{1}_B .
\end{equation}
Consider $n$ copies of this state, with copy index
$\ell=1,\ldots,n$.  We write
$A^n=A_1\cdots A_n$ and $B^n=B_1\cdots B_n$.  For a bit string
$x=(x_1,\ldots,x_n)\in\{0,1\}^n$, define
\[
    \ket{x}_{A^n}
    =
    \ket{x_1}_{A_1}\cdots\ket{x_n}_{A_n},
    \qquad
    \ket{x}_{B^n}
    =
    \ket{x_1}_{B_1}\cdots\ket{x_n}_{B_n},
\]
and let $|x|=\sum_{\ell=1}^n x_\ell$ be its Hamming weight.  Expanding the
tensor product gives
\begin{equation}
    \ket{\phi_\alpha}^{\otimes n}
    =
    \sum_{x\in\{0,1\}^n}
    (\cos\alpha)^{n-|x|}
    (-i\sin\alpha)^{|x|}
    \ket{x}_{A^n}\ket{x}_{B^n}.
    \label{eq:n_copy_weak_bell_general}
\end{equation}
Equivalently, when $0\le\alpha\le\pi/2$, writing
$p=\sin^2\alpha$ and $1-p=\cos^2\alpha$, this becomes
\begin{equation}
    \ket{\phi_\alpha}^{\otimes n}
    =
    \sum_{x\in\{0,1\}^n}
    (1-p)^{(n-|x|)/2}
    (-i)^{|x|}
    p^{|x|/2}
    \ket{x}_{A^n}\ket{x}_{B^n}.
    \label{eq:n_copy_weak_bell_pq}
\end{equation}
For general $\alpha$, the same construction applies after replacing
$\cos\alpha$ and $\sin\alpha$ by their absolute values in the Schmidt
coefficients; the signs and phases are known and can be incorporated by local
phase conventions.
Consequently the Hamming weight $|x|$ is distributed
as a binomial random variable $X\sim{\rm Bin}(n,p)$, since there are
$\binom{n}{w}$ strings with $|x|=w$.  Therefore, the total Schmidt weight
of all strings with $|x|\le L$ is
\begin{equation}
    F_{n,L}(\alpha)
    =
    \sum_{|x|\le L}(1-p)^{n-|x|}p^{|x|}
    =
    \sum_{w=0}^{L}\binom{n}{w}(1-p)^{n-w}p^w
    =
    \Pr[{\rm Bin}(n,p)\le L].
    \label{eq:FnL_binomial}
\end{equation}
The discarded Schmidt weight is correspondingly
\begin{equation}
    1-F_{n,L}(\alpha)
    =
    \Pr[{\rm Bin}(n,p)>L].
    \label{eq:discarded_binomial_tail}
\end{equation}
The finite-copy dilution protocol below simply keeps the low-weight Schmidt
subspace $|x|\le L$ and discards the binomial tail.

\begin{proposition}[Truncated finite-copy dilution]
\label{thm:finite_dilution_truncation}
Fix integers $n\ge1$ and $0\le L\le n$. Define
$D_L
    =
    \sum_{w=0}^{L}\binom{n}{w},
$ $m_L
    =
    \left\lceil \log_2 D_L \right\rceil ,
$
and
$
    F_{n,L}(\alpha)
    =
    \sum_{w=0}^{L}
    \binom{n}{w}
    (1-p)^{n-w}p^w .
$
Using $m_L$ Bell pairs, Alice and Bob can prepare by LOCC a pure state $\ket{\Phi_{n,L}}$ satisfying
\begin{equation}
    D\!\left(
        \ket{\Phi_{n,L}}\bra{\Phi_{n,L}},
        \ket{\phi_\alpha}\bra{\phi_\alpha}^{\otimes n}
    \right)
    \le
    \sqrt{1-F_{n,L}(\alpha)} .
    \label{eq:block_trace_distance_bound}
\end{equation}
Equivalently, since
$
    1-F_{n,L}(\alpha)
    =
    \Pr[{\rm Bin}(n,\sin^2\alpha)>L],
$
block trace distance at most \(\delta\) is guaranteed whenever
$
    \Pr[{\rm Bin}(n,\sin^2\alpha)>L]\le \delta^2 .
$
\end{proposition}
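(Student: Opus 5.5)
The plan is to build $\ket{\Phi_{n,L}}$ as the normalized projection of $\ket{\phi_\alpha}^{\otimes n}$ onto the low-weight Schmidt subspace $\{\ket{x}_{A^n}\ket{x}_{B^n}: |x|\le L\}$, and then argue separately that it can be made from $m_L$ Bell pairs and that it is close to the ideal block.

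\textbf{Construction.} Set
\[
\ket{\Phi_{n,L}}=F_{n,L}(\alpha)^{-1/2}\sum_{|x|\le L}c_x\ket{x}_{A^n}\ket{x}_{B^n},
\]
where $c_x=(\cos\alpha)^{n-|x|}(-i\sin\alpha)^{|x|}$ are the coefficients in Eq.~\eqref{eq:n_copy_weak_bell_general}. By Eq.~\eqref{eq:FnL_binomial}, $\sum_{|x|\le L}|c_x|^2=F_{n,L}(\alpha)$, so this state is normalized. It has Schmidt rank $D_L$. Its Schmidt coefficients are $|c_x|^2/F_{n,L}$, and the phases $(-i)^{|x|}$ are local, since Alice can apply the diagonal unitary $\ket{x}\mapsto(-i)^{|x|}\ket{x}$ on $A^n$.

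\textbf{Preparation from $m_L$ Bell pairs.} The $m_L$ Bell pairs form a maximally entangled state of Schmidt rank $2^{m_L}\ge D_L$. Its Schmidt vector, $2^{-m_L}$ repeated $2^{m_L}$ times, is majorized by every probability vector of length at most $2^{m_L}$. Nielsen's majorization theorem then gives a deterministic LOCC conversion into any pure state of Schmidt rank at most $2^{m_L}$, in particular $\ket{\Phi_{n,L}}$. Concretely, Alice prepares the target state locally on $A^n$ and an auxiliary register, teleports the $m_L$-qubit register using the Bell pairs, and Bob maps it isometrically onto $B^n$. The teleported register is a $D_L$-dimensional system embedded in $m_L=\lceil\log_2 D_L\rceil$ qubits, so this costs exactly $m_L$ ebits.

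\textbf{Distance bound.} Both states are pure, so
\[
D=\sqrt{1-|\braket{\Phi_{n,L}|\phi_\alpha^{\otimes n}}|^2}.
\]
The overlap is $F_{n,L}^{-1/2}\sum_{|x|\le L}|c_x|^2=F_{n,L}^{1/2}$, hence $|\braket{\Phi_{n,L}|\phi_\alpha^{\otimes n}}|^2=F_{n,L}$ and $D=\sqrt{1-F_{n,L}}$, which is Eq.~\eqref{eq:block_trace_distance_bound}.

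The equivalent formulation follows from Eq.~\eqref{eq:discarded_binomial_tail}: $1-F_{n,L}=\Pr[\mathrm{Bin}(n,\sin^2\alpha)>L]$, so requiring this tail to be at most $\delta^2$ guarantees $D\le\delta$.

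The only step that needs care is the LOCC preparation claim, and it reduces to the Schmidt-rank/majorization fact above. The rest is direct computation. For general $\alpha$ outside $[0,\pi/2]$, replace $\cos\alpha$ and $\sin\alpha$ by their absolute values in the Schmidt coefficients and absorb the known signs into local phase conventions, as the paper already notes.
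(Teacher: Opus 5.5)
Your proposal is correct and follows the paper's proof step for step. You project $\ket{\phi_\alpha}^{\otimes n}$ onto the weight-$\le L$ Schmidt subspace and renormalize, prepare the result from $m_L$ Bell pairs via Nielsen's majorization theorem (your explicit teleportation of an $m_L$-qubit register is a concrete realization of the same step), and obtain $\sqrt{1-F_{n,L}}$ from the overlap $F_{n,L}^{1/2}$ through the pure-state trace-distance formula. One small precision point: the Schmidt rank is at most $D_L$, not exactly $D_L$ (for example when $\sin\alpha=0$). Also, if $\sin^2\alpha=1$ and $L<n$, then $F_{n,L}=0$ and the normalization is undefined, though the bound is then trivial since any state lies within trace distance $1$; the paper's proof skips this edge case too.
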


\begin{proof}
Let $T_L=\{x\in\{0,1\}^n: |x|\le L\}$. The normalized truncated state is
\begin{equation}
    \ket{\Phi_{n,L}}
    =
    \frac{1}{\sqrt{F_{n,L}(\alpha)}}
    \sum_{x\in T_L}
    (1-p)^{(n-|x|)/2}
    (-i)^{|x|}
    p^{|x|/2}
    \ket{x}_A\ket{x}_B .
    \label{eq:truncated_state}
\end{equation}
Its Schmidt rank is at most $D_L=|T_L|$. Therefore it can be prepared from
$m_L=\lceil\log_2D_L\rceil$ Bell pairs: Alice and Bob first generate a
maximally entangled state of Schmidt rank at least $D_L$, and then convert
it by deterministic LOCC (with the majorization theorem) into the desired pure state with Schmidt rank
$D_L$ \cite{Nielsen_1999}.

The squared overlap with the ideal product state is exactly
\begin{equation}
    \left|
    \bra{\Phi_{n,L}}\phi_\alpha^{\otimes n}
    \right|^2
    =
    F_{n,L}(\alpha).
\end{equation}
For pure states, the trace distance is
\begin{equation}
    D(\ket{\psi}\bra{\psi},\ket{\varphi}\bra{\varphi})
    =
    \sqrt{1-|\langle {\psi|\varphi\rangle}|^2}.
\end{equation}
This gives Eq.~\eqref{eq:block_trace_distance_bound}.
\end{proof}

By monotonicity of trace distance under partial trace, the same block
guarantee implies that every individual copy marginal is also close to the
ideal weak Bell state:
\begin{equation}
    D(\rho_j,\ket{\phi_\alpha}\bra{\phi_\alpha})
    \le
    \sqrt{1-F_{n,L}(\alpha)} ,
    \label{eq:single_copy_marginal_bound}
\end{equation}
where $\rho_j$ is the marginal state of the $j$-th prepared copy.
However, the block guarantee in Eq.~\eqref{eq:block_trace_distance_bound} is
stronger and is the one used to control the error of a full circuit, since
the prepared copies may be correlated.

\subsection{Matching finite-resource preparation to the Trotter error budget}
\label{subsec:matching_resource_trotter_xrror}

We now explain how finite-accuracy preparation of the weak resource states is
incorporated into the full adaptive RUS Trotterization protocol. The important
point is that the error must be assigned to an entire $K$-round RUS attempt,
not only to the first weak resource. Indeed, if several unfavorable heralded
outcomes occur, the protocol uses resource states with doubled angles
$\theta,2\theta,\ldots,2^{K-1}\theta$. The preparation errors of all
resources that may be used in the adaptive correction sequence must therefore
be included in the error budget.

Let $g$ label a nonlocal RUS attempt in the product-formula circuit. For a
fixed-order product formula, one may take $g=(s,\alpha,x)$, where
$s=1,\ldots,r$ labels the Trotter segment, $\alpha$labels the nonlocal
sweep inside the segment, and $x\in\mathfrak{X}_\alpha$. The corresponding
initial rotation angle is denoted by $\theta_g$. A $K$-round ideal RUS
attempt may use the resource states with angles
$ \theta_g,\;2\theta_g,\;\ldots,\;2^{K-1}\theta_g .
$
For the purpose of error analysis, define the ideal resource package
\begin{equation}
    \Omega_{g,K}
    =
    \bigotimes_{j=0}^{K-1}
    \ket{\phi_{2^j\theta_g}}\bra{\phi_{2^j\theta_g}} .
    \label{eq:ideal_resource_package}
\end{equation}
This package notation is only an analytical device: operationally, the
resources can still be prepared and consumed on demand, and the entanglement
cost is counted only when a round is actually reached.

Suppose the actual prepared resource for round $j$ of attempt $g $is
$\widetilde \phi_{g,j}$, satisfying
\begin{equation}
    D\!\left(
        \widetilde \phi_{g,j},
        \ket{\phi_{2^j\theta_g}}\bra{\phi_{2^j\theta_g}}
    \right)
    \le
    \delta_{g,j},
    \label{eq:round_resource_xrror}
\end{equation}
where $D(\rho,\sigma)=\frac12\|\rho-\sigma\|_1$. Let
$
    \eta_g
    =
    \sum_{j=0}^{K-1}\delta_{g,j},
$ 
we have the following Lemma.

\begin{lemma}[Error of an approximate RUS resource package]
\label{lem:rus_package_xrror}
Let $\mathcal R_{g,K}$ be the ideal $K$-round RUS instrument for attempt
$g$, including the classical success/failure flag, and let
$\widetilde{\mathcal R}_{g,K}$ be the same adaptive LOCC protocol with the
approximate resources $\widetilde \phi_{g,j}$. Then
\begin{equation}
    D_\diamond(
        \mathcal R_{g,K},
        \widetilde{\mathcal R}_{g,K}
    )
    \le
    \eta_g ,
    \label{eq:rus_package_diamond_bound}
\end{equation}
where $D_\diamond(\mathcal N,\mathcal M)=\frac12\|\mathcal N-\mathcal M\|_\diamond$.
\end{lemma}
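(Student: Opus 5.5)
The plan is a hybrid (telescoping) argument combined with Lemma~\ref{lem:resource_to_channel_xrror}. The ideal instrument $\mathcal R_{g,K}$ and the approximate one $\widetilde{\mathcal R}_{g,K}$ are the same fixed adaptive LOCC procedure. It acts jointly on the data, a reference system, and a resource package. The ideal package is $\Omega_{g,K}$ from Eq.~\eqref{eq:ideal_resource_package}, and the approximate package is $\widetilde\Omega_{g,K}=\bigotimes_{j=0}^{K-1}\widetilde\phi_{g,j}$. Resources for rounds that are never reached are simply traced out (discarded) by the procedure. Including them in the package therefore does not change the instrument, and it makes the channel a fixed CPTP map of the form $\rho_{RA}\otimes\omega\mapsto\mathcal R(\rho_{RA}\otimes\omega)$. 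The classical success/failure flag is kept as an output register. Lemma~\ref{lem:resource_to_channel_xrror} explicitly allows such registers and adaptive measurements.

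First, I apply Lemma~\ref{lem:resource_to_channel_xrror} with $\omega=\Omega_{g,K}$ and $\widetilde\omega=\widetilde\Omega_{g,K}$. This gives
\begin{equation}
D_\diamond(\mathcal R_{g,K},\widetilde{\mathcal R}_{g,K})\le D\!\left(\Omega_{g,K},\widetilde\Omega_{g,K}\right).
\end{equation}
Second, I bound the trace distance between the two product states by the sum of the per-factor distances. To do this, I define the hybrid states $\Omega^{(k)}=\bigotimes_{j<k}\widetilde\phi_{g,j}\otimes\bigotimes_{j\ge k}\phi_{2^j\theta_g}$ for $k=0,\dots,K$. Consecutive hybrids differ in exactly one tensor factor, namely factor $k$. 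Trace distance satisfies $D(\sigma\otimes\tau,\sigma'\otimes\tau)=D(\sigma,\sigma')$, since the trace norm is multiplicative under tensor products and $\|\tau\|_1=1$. Hence $D(\Omega^{(k)},\Omega^{(k+1)})\le\delta_{g,k}$ by Eq.~\eqref{eq:round_resource_xrror}. The triangle inequality then gives $D(\Omega_{g,K},\widetilde\Omega_{g,K})\le\sum_{j}\delta_{g,j}=\eta_g$, which proves Eq.~\eqref{eq:rus_package_diamond_bound}.

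The one delicate step is the modeling claim in the first step: the adaptive, on-demand consumption of resources must be representable as a single CPTP map acting on the full package. I will handle this explicitly. Each round's operation is a controlled LOCC branch conditioned on earlier classical outcomes. On the branch where the protocol has already terminated, the later resource registers are traced out. Tracing out registers is CPTP, so the overall map is CPTP and linear in the package state. I also need to address correlated approximate resources, which can arise because block dilution produces correlated copies. If the $\widetilde\phi_{g,j}$ are correlated, I would instead compare against a joint state with those marginals. In that case I would rely on the block guarantee of Proposition~\ref{thm:finite_dilution_truncation}, applied directly through Lemma~\ref{lem:resource_to_channel_xrror}. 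For the stated product form, however, the hybrid argument suffices.
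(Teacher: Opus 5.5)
Your proposal is correct and follows essentially the same route as the paper: you treat the adaptive on-demand protocol as a single CPTP map on the data plus the full resource package, apply contractivity (via Lemma~\ref{lem:resource_to_channel_xrror}), and bound the package distance by a telescoping argument over the tensor factors. Your explicit hybrid states and the identity $D(\sigma\otimes\tau,\sigma'\otimes\tau)=D(\sigma,\sigma')$ simply spell out the telescoping step that the paper states without detail.
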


\begin{proof}
Consider first the mathematical protocol in which all $K$ resource registers
are supplied at the beginning, although the adaptive LOCC map only accesses
the $j$-th register if round $j$ is reached. This is equivalent to the
on-demand implementation for the input-output channel. The
entire adaptive RUS procedure is then a CPTP map acting on the data and the resource package. By contractivity of trace distance under CPTP maps, replacing the ideal package by the approximate package changes the resulting
instrument by at most the trace distance between the two packages.

It remains to bound the package distance. By a telescoping argument for
tensor products,
\begin{equation}
    D\!\left(
        \bigotimes_{j=0}^{K-1}\widetilde\phi_{g,j},
        \bigotimes_{j=0}^{K-1}
        \ket{\phi_{2^j\theta_g}}\bra{\phi_{2^j\theta_g}}
    \right)
    \le
    \sum_{j=0}^{K-1}\delta_{g,j}
    =
    \eta_g .
\end{equation}
Combining these two facts gives Eq.~\eqref{eq:rus_package_diamond_bound}.
\end{proof}

Let
$
    M
    =
    r\sum_{\alpha=1}^{\nu_q}|\mathfrak{X}_\alpha| = r \nu_q |\mathfrak{X}|
$
be the total number of nonlocal RUS attempts in the $r$-segment
fixed-order product formula. A hybrid argument over the $M$ attempts gives
a total resource-preparation error at most
$
    \sum_{g=1}^{M}\eta_g .
$
Therefore, if $\varepsilon_{\rm res}$ is the total error budget assigned to
finite resource preparation, it is sufficient to impose
\begin{equation}
    \eta_g
    =
    \sum_{j=0}^{K-1}\delta_{g,j}
    \le
    \frac{\varepsilon_{\rm res}}{M}
    \label{eq:eta_budget_per_attempt}
\end{equation}
for every RUS attempt $g$. A simple uniform allocation is
$
    \delta_{g,j}
    =
    \frac{\varepsilon_{\rm res}}{MK}
  $ for $
    j=0,\ldots,K-1.
$
More refined allocations are possible, but $\delta_{g,j}
    = \frac{\varepsilon_{\rm res}}{MK}$
is sufficient for a rigorous error bound.

We now relate this to the product-formula error. For a $q$-th order product
formula, write
\begin{equation}
    \epsilon_{\rm PF}(t,r)
    \le
    c_q\alpha_{q}\frac{t^{q+1}}{r^q},
    \label{eq:PF_xrror_budget}
\end{equation}
where $\alpha_{q}$ is the sum of
$(q+1)$-fold nested-commutator norms and $c_q$ is a formula-dependent
constant. One may choose the resource-preparation budget to be comparable to
the product-formula error, for example
\begin{equation}
    \varepsilon_{\rm res}
    =
    O(\epsilon_{\rm PF}(t,r)).
\end{equation}
Then $\delta_{g,j}
    = \frac{\varepsilon_{\rm res}}{MK}$ gives
\begin{equation}
    \delta_{g,j}
    =
    O\!\left(
        \frac{
            \alpha_{q}t^{q+1}
        }{
            r^{q}MK
        }
    \right)
    =
    O\!\left(
        \frac{
            \alpha_{q}t^{q+1}
        }{
            r^{q+1}K
            \nu_q|\mathfrak{X}|
        }
    \right).
    \label{eq:delta_round_trotter_scaling}
\end{equation}
Alternatively, if $r$ is chosen so that
$\epsilon_{\rm PF}(t,r)\le\epsilon/2$, one may take
$\varepsilon_{\rm res}\le\epsilon/2$, giving
\begin{equation}
    \delta_{g,j}
    =
    O\!\left(
        \frac{\epsilon}{MK}
    \right)
    =
    O\!\left(
        \frac{\epsilon}{
            rK \nu_q|\mathfrak{X}|
        }
    \right).
    \label{eq:delta_round_target_xrror}
\end{equation}

Finally, we translate each $\delta_{g,j}$into an explicit finite-Bell-pair
preparation cost. For round $j$ of attempt $g$, let
$
    \alpha_{g,j}=2^j\theta_g
$ and $
    p_{g,j}=\sin^2\alpha_{g,j}.
$
If \(n\) copies of the resource with angle \(\alpha_{g,j}\) are prepared as a
block, Proposition~\ref{thm:finite_dilution_truncation} shows that it is
sufficient to choose \(L_{g,j}\) such that
\begin{equation}
    \Pr\!\left[
        {\rm Bin}(n,p_{g,j})>L_{g,j}
    \right]
    \le
    \delta_{g,j}^2 .
    \label{eq:round_tail_condition}
\end{equation}
The corresponding integer number of Bell pairs is
\begin{equation}
    m_{g,j}
    =
    \left\lceil
        \log_2
        \sum_{w=0}^{L_{g,j}}
        \binom{n}{w}
    \right\rceil .
    \label{eq:round_integer_cost}
\end{equation}
Using the Bernstein-Chernoff bound, a sufficient explicit choice is
\begin{equation}
    L_{g,j}
    =
    \left\lceil
        np_{g,j}
        +
        \sqrt{
            2np_{g,j}
            \log\frac{1}{\delta_{g,j}^2}
        }
        +
        \frac{2}{3}
        \log\frac{1}{\delta_{g,j}^2}
    \right\rceil .
    \label{eq:Lgj_choice}
\end{equation}
Together with
$
    \sum_{w=0}^{L}\binom{n}{w}
    \le
    \left(\frac{en}{L}\right)^L
 $ and $
    1\le L\le n,$
this gives the finite-copy scaling
\begin{equation}
    m_{g,j}
    =
    O\!\left[
        L_{g,j}
        \log_2\!\left(
            \frac{en}{L_{g,j}}
        \right)
    \right].
    \label{eq:m_scaling_round}
\end{equation}
Thus finite resource synthesis only needs to meet the per-round accuracy specified in \eqref{eq:delta_round_target_xrror}. Exact preparation is unnecessary: the
finite-preparation error is absorbed into the total simulation error budget.
The possibility of repeated unfavorable RUS outcomes is already accounted for
by the package condition
$\sum_{j=0}^{K-1}\delta_{g,j}\le\varepsilon_{\rm res}/M$.

\end{widetext}

\end{document}